\newtheorem{theorem}{\textbf{Theorem}}
\newtheorem{lemma}{\textbf{Lemma}}
\theoremstyle{definition}
\newtheorem{definition}{\textbf{Definition}}
\begin{document}  

\title{\protect\vspace{-.5in}{Supporting Soft Real-Time Sporadic Task Systems on Heterogeneous Multiprocessors with No Utilization Loss}\thanks{This work was supported by a start-up
grant from the University of Texas at Dallas.}}
\author{Guangmo Tong and Cong Liu\\
Department of Computer Science, University of Texas at Dallas}

\maketitle
\thispagestyle{empty}

\begin{abstract}

Heterogeneous multicore architectures are becoming increasingly popular due to their potential of achieving high performance and energy efficiency compared to the homogeneous multicore architectures. In such systems, the real-time scheduling problem becomes more challenging in that processors have different speeds. A job executing on a processor with speed $x$ for $t$ time units completes $(x \cdot t)$ units of execution. Prior research on heterogeneous multiprocessor real-time scheduling has focused on hard real-time systems, where, significant processing capacity may have to be sacrificed in the worst-case to ensure that all deadlines are met. As meeting hard deadlines is overkill for many soft real-time systems in practice, this paper shows that on soft real-time heterogeneous multiprocessors, bounded response times can be ensured for globally-scheduled sporadic task systems with no utilization loss. A GEDF-based scheduling algorithm, namely GEDF-H, is presented and response time bounds are established under both preemptive and non-preemptive GEDF-H scheduling. Extensive experiments show that  the magnitude of the derived response time bound is reasonable, often smaller than three task periods. To the best of our knowledge, this paper is the first to show that soft real-time sporadic task systems can be supported on heterogeneous multiprocessors without utilization loss, and with reasonable predicted response time.

\end{abstract}

\section{Introduction}
\label{sec:intro}

Given the need to achieve higher performance without driving up power consumption and heat dissipation, most chip manufacturers have shifted to multicore architectures. An important subcategory of such architectures are those that are heterogeneous in design. By integrating processors with different speeds, such architectures can provide high performance and power efficiency \cite{liu2012power}. Heterogeneous multicore architectures have been widely adopted in various computing domains, ranging from embedded systems to high performance computing systems. 

Most prior work on supporting real-time workloads on such heterogeneous multiprocessors has focused on hard real-time (HRT) systems. Unfortunately, if all task deadlines must be viewed as hard, significant processing capacity must be sacrificed in the worst-case, due to either inherent schedulability-related utilization loss---which is unavoidable under most scheduling schemes---or high runtime overheads---which typically arise in optimal schemes that avoid schedulability-related loss.\footnote{Such utilization loss may exist even in a homogeneous HRT multiprocessor system where all processors have the same speed \cite{baruah2007techniques, shin2012rtss, davis2011survey, bertogna2007response, davis2012optimal}.} In many systems where less stringent notions of real-time correctness suffice, such loss can be avoided by viewing deadlines as soft. In this paper, we consider the problem of scheduling soft real-time (SRT) sporadic task systems on a heterogeneous multiprocessor; the notion of SRT correctness we consider is that response time is bounded.

All multiprocessor scheduling algorithms follow either a \textit{partitioning} or \textit{globally-scheduling approach} (or some combination of the two). Under partitioning, tasks are statically mapped to processors, while under global scheduling, they may migrate. Under partitioning schemes, constraints on overall utilization are required to ensure timeliness even for SRT systems due to bin-packing-related loss. On the other hand, a variety of global schedulers including the widely studied global earliest-deadline-first (GEDF) scheduling algorithm are capable of ensuring bounded response times for sporadic task systems on a homogeneous multiprocessor, as long as the system is not over-utilized \cite{Devi}. Motivated by this optimal result, we investigate whether GEDF remains optimal in a heterogeneous multiprocessor SRT system.

\paragraph{Key observation.} Under GEDF, we select $m$ highest-priority jobs at any time instant and execute them on $m$ processors. The job prioritization rule is according to earliest-deadline-first. Regarding the processor selection rule (i.e., which processor should be selected for executing which job), it is typical to select processors in an arbitrary manner. On a homogeneous multiprocessor, such an arbitrary processor selection rule is reasonable since all processors have identical speeds. However, on a heterogeneous multiprocessor, this arbitrary strategy may fail to schedule a SRT sporadic task system that is actually feasible under GEDF. Consider a task system with two sporadic tasks $\tau_1(2,2)$ and $\tau_2(4,2)$ (notation $\tau_i(e_i, p_i)$ denotes that task $\tau_i$ has an execution cost of $e_i$ and a period of $p_i$) scheduled on a heterogeneous multiprocessor with two processors, $M_1$ with speed of one unit execution per unit time and $M_2$ with speed of two units execution per unit time. Assume in the example that task deadlines equal their periods and priority ties are broken in favor of $\tau_1$. Fig.\ref{fig:simpleexample}(a) shows the corresponding GEDF schedule with an  arbitrary processor selection strategy for this task system. As seen in the figure, if we arbitrarily select processors for job executions, the response time of $\tau_2$ grows unboundedly. However, if we define specific processor selection rules, for example always executing tasks with higher utilizations on processors with higher speeds, then this task system becomes schedulable as illustrated in Fig.\ref{fig:simpleexample}(b).

\begin{figure}[t]
	\begin{center}
	\includegraphics[width=2.8in]{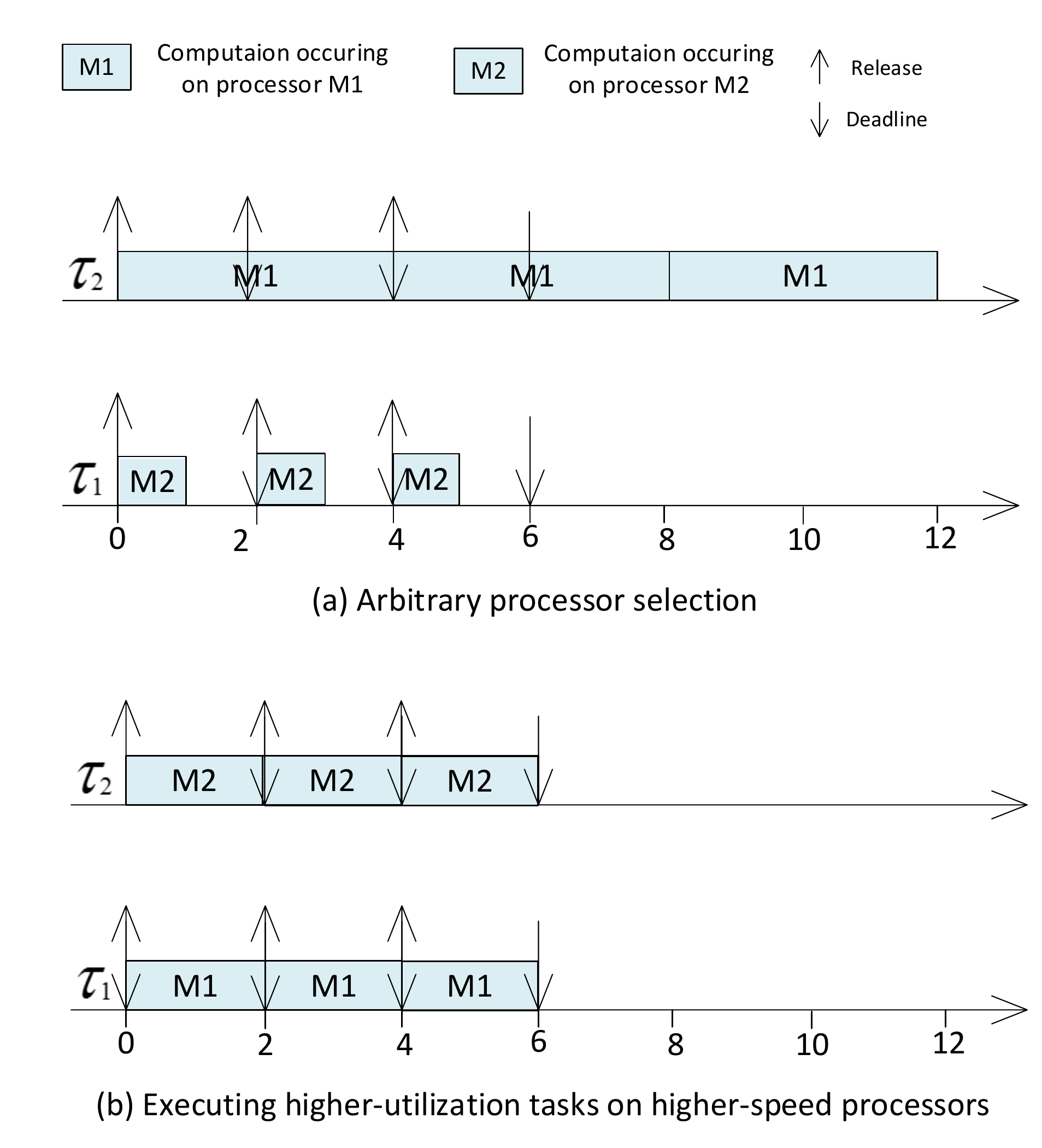} 
	\end{center}  
\vspace{-3mm}
\caption{Motivational example.}
\vspace{-3mm}
\label{fig:simpleexample}
\end{figure}

%?Guangmo: explain what happens in this schedule. Then talk about Fig.~?(b) where a GEDF schedule but with a specific processor selection rule is defined (i.e., tasks with heavier execution requirements are assigned to faster cores). Actually the task system becomes schedulable. Intuitively discuss why (because of the different speeds).?

The above example suggests that \textit{on a heterogeneous multiprocessor, GEDF's processor selection strategy is critical to ensuring schedulability}. Motivated by this key observation, we consider in this paper whether it is possible to develop a GEDF-based scheduling algorithm with a specific processor selection rule, which can schedule SRT sporadic task systems on a heterogeneous multiprocessor with no utilization loss.

\paragraph{Overview of related work.} The real-time scheduling problem on heterogeneous multiprocessors has received much attention \cite{baruah2011partitioned, kumar2004single, andersson2010assigning, baruah2005partitioned, funk2005task, funk2001line, liu2012power}. Most such work has focused on HRT systems, which inevitably incur utilization loss. Partitioning approaches have been proposed in \cite{baruah2004feasibility, baruah2011partitioned, baruah2005partitioned, funk2005task, funk2001line, liu2012power, andersson2010assigning} and quantitative approximation ratios have been derived for quantifying the quality of these approaches. Unfortunately, such partitioning approaches inherently suffer from bin-packing-related utilization loss, which may be significant in many cases. The feasibility problem of globally scheduling HRT sporadic task systems on a heterogeneous multiprocessor has also been studied \cite{baruah2004feasibility}. In \cite{cong2012energy}, a global scheduling algorithm has been implemented on Intel's QuickIA heterogeneous prototype platform and experimental studies showed that this approach is effective in improving the system energy efficiency.

The SRT scheduling problem on a heterogeneous multiprocessor has also been studied  \cite{leontyev2007tardiness}. A semi-partitioned approach has been proposed in \cite{leontyev2007tardiness}, where tasks are categorized as either ``fixed'' or ``intergroup'' and processors are partitioned into groups according to their speeds. Tasks belonging to the fixed category are only allowed to migrate among processors within in the task's assigned group. Only tasks belonging to the migrating category are allowed to migrate among groups. Although this approach is quite effective in many cases, it yields utilization loss and requires several restricted assumptions (e.g., the system contains at least 4 processors and each processor group contains at least two processors). Different from this work, our focus in this paper is on designing GEDF-based global schedulers that ensure no utilization loss under both preemptive and non-preemptive scheduling.

%summarize prior work on HRT and SRT scheduling on heterogeneous multiprocessors. Most of such work focus on partitioning approaches (the only global scheduling paper is due to Sanjoy's Baruah's RTSS 2004 paper). Talk about our different objectives.

\paragraph{Contribution.} 
In this paper, we design and analyze a GEDF-based scheduling algorithm GEDF-H (GEDF for Heterogeneous multiprocessors) for supporting SRT sporadic task systems on a heterogeneous multiprocessor that contains processors with different speeds. The derived schedulability test shows that any sporadic task system is schedulable under both preemptive and non-preemptive GEDF-H scheduling with bounded response times if $U_{sum} \leq R_{sum}$ and Eq.(1) hold, where $U_{sum}$ is the total task utilization, $R_{sum}$ is the total system capacity, and Eq.(1) is an enforced requirement on the relationship between task parameters and processor parameters. We show via a counterexample that task systems that violate Eq.(1) may have unbounded response time under any scheduling algorithm. As demonstrated by experiments, the response time bound achieved under GEDF-H is reasonably low, often within three task periods. Thus, GEDF-H is able to guarantee schedulability with no utilization loss while providing low predicted response time.

\paragraph{Organization.} This paper is organized as follows. In Sec.2, we describe the system model. Then in Sec.3, we describe GEDF-H. In Sec.4, we present our schedulability analysis for GEDF-H and derive the resulting schedulability test. In Sec.5, we show experimental results. We conclude in Sec.6

\section{System Model}
\label{sec:model}

In this paper, we consider the problem of scheduling $n$ sporadic SRT tasks on $m \geq 1$ heterogeneous processors. Let set $\tau=\lbrace \tau_1,...,\tau_n \rbrace$ denote the $n$ independent sporadic tasks and $\chi$ denotes the set of $m$ heterogeneous processors.

Assume there are $z\geq 1$ kinds of processors distinguished by their speeds. Let $\chi_{i}$ $( 1\leq i\leq z)$ and $M_{i}\geq 1$ denote the subset of the $i$th kind of processors in $\chi$ and the number of processors in $\chi_{i}$ respectively. Thus, $\chi=\cup_{i=1}^{z}\chi_{i}$ and $m=\sum_{i=1}^{z}M_{i}$. We assume the processors in $\chi_{1}$ have unit speed and processors in $\chi_{i}$ have speed $\alpha_{i}$ (i.e.$, \alpha_{1}=1, \alpha_{i} < \alpha_{i+1}$). For clarity, we use $\alpha_{max}$ to denote the maximum speed (i.e., $\alpha_{max}=\alpha_z$). Let $R_{sum}= \sum_{i=1}^{z}\alpha_{i} \cdot M_{i}$. 

We define the unit workload to be the amount of work done under the unit speed within a unit time. We assume that each job of $\tau_i$ executes for at most $e_i$ workload which needs $e_i$ time units under the unit speed. The $j^{th}$ job of $\tau_i$, denoted $\tau_{i,j}$, is released at time $r_{i,j}$ and has an absolute deadline at time $d_{i,j}$.  Each task $\tau_i$ has a period $p_i$, which specifies the minimum time between two consecutive job releases of $\tau_i$, and a deadline $d_i$, which specifies the relative deadline of each such job, i.e., $d_{i,j}=r_{i,j}+d_i$. The utilization of a task $\tau_i$ is defined as $u_i=e_i/p_i$, and the utilization of the task system $\tau$ as $U_{sum}=\sum_{\tau_i \in \tau} u_i$. An sporadic task system $\tau$ is said to be an \textit{implicit-deadline} system if $d_i = p_i$ holds for each $\tau_i$. Due to space limitation, we limit attention to implicit-deadline sporadic task systems in this paper.

%\small\footnote{$\tau$ is said to be a \textit{constrained-deadline} system if, for each task $\tau_i \in \tau$, $d_i \leq p_i$, and an \textit{arbitrary-deadline} system if, for each $\tau_i$, the relation between $d_i$ and $p_i$ is not constrained (e.g., $d_i > p_i$ is possible).} \normalsize 

\normalsize
%A common case for real-time workloads is that both self-suspending tasks and computational tasks (which do not suspend) co-exist. To reflect this, we let $U_{sum}^{s}$ denote the total utilization of all self-suspending tasks, and $U_{sum}^{c}$ denote the total utilization of all computational tasks. 

Successive jobs of the same task are required to execute in sequence. If a job $\tau_{i,j}$ completes at time $t$, then its \textit{response time} is $max(0, t-r_{i,j})$. A task's response time is the maximum response time of any of its jobs. Note that, when a job of a task misses its deadline, the release time of the next job of that task is not altered. We require $u_i \leq \alpha_{max}$, and $U_{sum} \leq R_{sum}$, for otherwise the response time may grow unboundedly. 
%For simplicity, we henceforth assume that each job of any task $\tau_i$ executes \textit{exactly} $e_i$ workload. 

%%In this paper, we assume that time is integral, and for any task $\tau_i \in \tau$, each of $e_i \geq 1$, $p_i \geq 1$, and $d_i \geq 1$  is a non-negative integer. If a job executes at time instant $t$ then it executes during the entire time interval $[t,t+1)$. Correspondingly, we assume $\alpha_{i}$ are positive integers.%Similarly, if a processor is idle or busy at $t$, then this processor is idle or busy within $[t,t+1)$. 

%So that our analysis can be more accurately applied in settings where a task's total suspension time varies from job to job, we assume that a fixed parameter $H$ ($H \geq 1$) is specified and that $S_i^{H}$ denote the maximum total self-suspension length for any $H$ ($H \geq 1$) consecutive jobs of task $\tau_i$. Note that if $H=1$, then a maximum per-job total suspension length is being assumed. 

Under GEDF, released jobs are prioritized by their absolute deadlines. We assume that ties are broken by task ID (lower IDs are favored). Thus, two jobs cannot have the same priority. In this paper, we use continuous time system and parameters are positive rational numbers.  

On a heterogeneous multiprocessor, the response time can still grow unboundedly, even if $u_i \leq \alpha_{max}$ and $U_{sum} \leq R_{sum}$ hold. This is illustrated by the following counterexample.   

\begin{figure}[t]
	\begin{center}
	\includegraphics[width=3in]{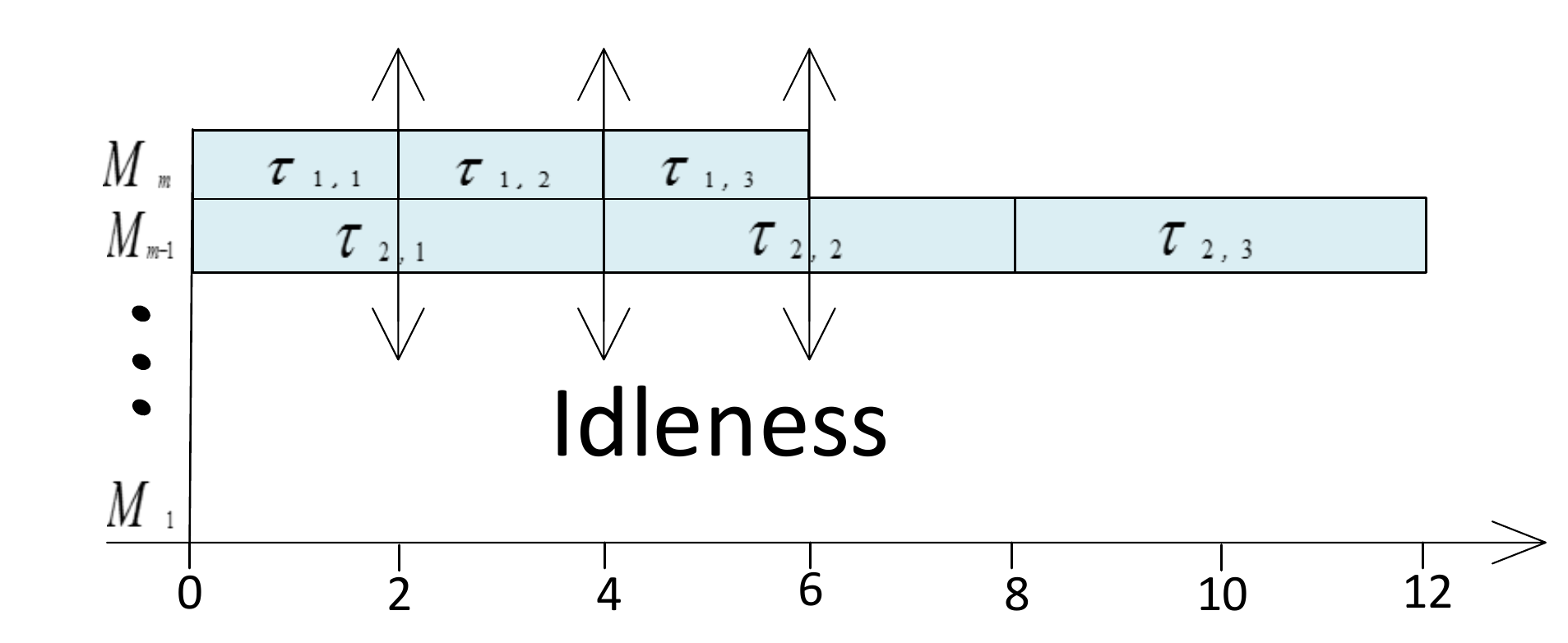} 
	\end{center} 
\vspace{-5mm}
\caption{\small GEDF schedule of the tasks in counterexample.}
\vspace{-5mm}
\label{fig:counter}
\end{figure}

\paragraph{Counterexample.} Consider a sporadic task system with two tasks $\tau_1= \tau_2= ( 2, 1)$ and a heterogeneous multiprocessor with $m\geq 3$ processors where $M_m$ has a speed of $\alpha_{max}=2$ and other $m-1$ processors have unit speed. For this system, $u_1= u_2= \alpha_{max}=2$ and $R_{sum}=2+(m-1)=m+1\geq 4=U_{sum}$. The ratio of $U_{sum}/R_{sum}$ may approximate to 0 when $m$ is arbitrarily large. However, as seen in the GEDF schedule illustrated in Fig.\ref{fig:counter}, regardless of the value we choose for $m$, the response time of $\tau_2$ still grows unboundedly. Actually, we analytically prove that this task system cannot be scheduled under any global or partitioned schedule algorithm. This counterexample implies that a task system may not be feasible on a heterogeneous multiprocessor even provided $U_{sum} \leq R_{sum}$. As seen in Fig.\ref{fig:counter}, adding more unit speed processors  does not help either because there are two tasks with utilization greater than $1$ while only one processor with speed greater than $1$. Motivated by this observation, we enforce the following requirement.

Let $\varPhi_i= \{ \tau_j| \alpha_i<  u_j\}$, $1\leq i< z$, and $|\varPhi_i|$ be the number tasks in $\varPhi_i$. Let $\varPhi_0 = \tau$. Let $\Psi_i=\bigcup_{j=i+1}^{z}\chi_j$, $0\leq i< z$,  and $|\Psi_i|$ be the number of processor in $\Psi_i$. Thus, $\Phi_i$ is the set of tasks that would fail their deadlines if run entirely on a processor of type i or lower, and $\Psi_i$ is the set of processors of type i+1 or higher. For each $1\leq i< z$, we require
\begin{equation}
\label{eq:Restriction}
|\varPhi_i|\leq |\Psi_i|
\end{equation}
Intuitively, Eq.(\ref{eq:Restriction}) requires that if we have $k$ processors with speed $> \alpha_i$, then at most $k$ tasks with utilization $> \alpha_i$
can be supported in the system, which is also a reasonable requirement in practice. Note that, other than $U_{sum}\leq R_{sum}$, we do not place any restriction on $U_{sum}$.

\paragraph{Example 1.} Consider a task system with 4 tasks ,$\tau_1=(2, 1), \tau_2=(2, 1),  \tau_3=(1, 1), \tau_4=(1, 1)$ and a heterogeneous multiprocessor consisting of 3 processors with 2 kinds of speeds where $\alpha_1= 1$, $\alpha_2= 2.5$. For this task system, $u_1= u_2= 2, u_3= u_4= 1$ and we have $\chi_1=\{M_1\}$, $\chi_2=\{M_2, M_3\}$,  $\varPhi_0=\{\tau_1, \tau_2, \tau_3, \tau_4\}$, $\varPhi_1=\{\tau_1, \tau_2\}$, and $\Psi_1= \{ M_2, M_3\}$. Thus, we have $|\varPhi_1|= 2 \leq |\Psi_1|= 2$. This system clearly meets the requirement stated in Eq.(\ref{eq:Restriction}). %And it is easy to check that the system model in Counterexample 1 does not meet such requirement.

\paragraph{Model explanation.}
\label{sec:explanation}

%First, in a real-time system on $m$ identical processors, it is known that response time bound can be guaranteed under GEDF if $U_{sum} \leq m$. Here we take the number of processors as the total resource or processor capability. However, on heterogeneous multiprocessor, the number of processors does not make sense in that processors have different speeds. Thus, we have to find our how to determine whether our processors or resource has been fully utilized.
%With heterogeneous processors we have the number of processors and the speed of processors.Thus,  naturally the processor capability or total resource should be $R_{sum}$ as defined above. Note that $R_{sum}$ is the total speed of the processors. 
In a real-time system with $m$ identical processors, it is known that response time bound can be guaranteed under GEDF if $U_{sum} \leq m$ \cite{Devi}.
For such homogeneous multiprocessor systems, the number of processors is often used to denote the total capacity. However, on a heterogeneous multiprocessor, the number of processors can no longer accurately represent the total capacity because processors have different speeds. With heterogeneous processors, we have two factors, the number of processors and the speed of each individual processor, that affect the total capacity. Thus, the total capacity of the system naturally is given by $R_{sum}$ as defined above. In other words, the total capacity is represented by the sum of the processor speeds.

Now let us consider the task model. In our model, the utilization $u_i=e_{i}/p_i$ is a quantity of speed because $e_i$ is a quantity of workload and $p_i$ is a quantity of time. In fact, using such speed to denote the utilization is intuitive because in order to meet deadlines, any task $\tau_i$ is expected to execute $e_i$ units workload within $p_i$ time units. Hence, $U_{sum}$ represents that total speed required by the task system.

%%\textbf{Example 1}. Consider a soft real-time system with four tasks:$\tau_1$=( 6(execution), 10(period) ),
%%$\tau_2$=( 6(execution), 10(period)), $\tau_3$=( 8(execution), 10(period)), $\tau_4$=( 8(execution), 10(period)). Fig 1(a) shows the schedule of these four tasks on a identical multiprocessor consisting three processors$(M_1,M_2,M_3)$ each with 1 unit speed under Global EDF. Fig 1(b) shows the schedule of these four tasks on a heterogeneous multiprocessor consisting one processor$(M_1)$ with 1 unit speed and another processor$(M_2)$ with 2 units speed. We consider the first three jobs of each task. In this example, two platforms have the same total speed. From the figures we can see at time instant 30, 6 units workload remains under the identical processor while 2 units workload remains under the heterogeneous processor.
\section{A GEDF-based Scheduling Algorithm for Heterogeneous Multiprocessor}
\label{sec:GEDF-H}
On a homogeneous multiprocessor, at any time instant, under GEDF,  when we assign $k$ ($k \leq m$) of the $n$ tasks to be executed on $k$ processors, we can arbitrarily choose processors for tasks because processors have the same speed. However, on a heterogeneous multiprocessor, if we arbitrarily choose processors for tasks, the bounded response time cannot be guaranteed as discussed in Sec.1. Motivated by this key observation, we design a GEDF-based scheduling algorithm  GEDF-H to support SRT sporadic task systems on a heterogeneous multiprocessor. GEDF-H enforces the following specific processor selection rule.

\paragraph{GEDF-H description} At any time instant  under GEDF-H, when trying to assign a job $\tau_{l,k}$(i.e., $\tau_{l,k}$ is among the $m$ highest-priority jobs at $t$) to an available processor, we consider two cases. \textbf{Case 1.} If $u_l\leq 1$, we assign $\tau_{l,k}$ to an arbitrary available processor. \textbf{Case 2.} $u_l> 1$. In this case, for some $1\leq i< z$, $\alpha_i< u_l\leq \alpha_{i+1}$. If there is an available processor $M^{'}$ in $\Psi_i$, we assign $\tau_{l,k}$ to $M^{'}$. Otherwise, by Eq. (1), there must exist at least one task $\tau_i$ with utilization $u_i \leq \alpha_i$ that has a job $\tau_{i,j}$ executing on processor $M^{'}$ in $\Psi_i$ at instant $t$. We know that, at least one processor is available at $t$ (since $\tau_{l,k}$ has not been assigned yet). Then, we move job $\tau_{i,j}$ to any available processor and assign $\tau_{l,k}$ to $M^{'}$. Note that, GEDF-H is still a job-level static-priority scheduler because we do not change a job's priority at runtime. GEDF-H gives us the following property.

\vspace{-1mm}
\begin{changemargin}{4mm}{4mm}
\textbf{(P0)} At any time instant $t$, if a job $\tau_{i,j}$ of task $\tau_i$ is executing on a processor $M^{'}$ with speed $\alpha^{'}$, we have $u_i\leq \alpha^{'}$. Let $v_i$ be the slowest speed of processors on which jobs of $\tau_i$ could execute under GEDF-H, which implies that $v_i= \alpha_{j+1}$ if $\alpha_j < u_i \leq \alpha_{j+1}$. Thus, by GEDF-H, we have
\begin{equation}
\label{eq:GEDF-H}
v_i \geq u_i
\end{equation}
\end{changemargin}
\vspace{-1mm}

\begin{figure}[t]
	\begin{center}
	\includegraphics[width=3in]{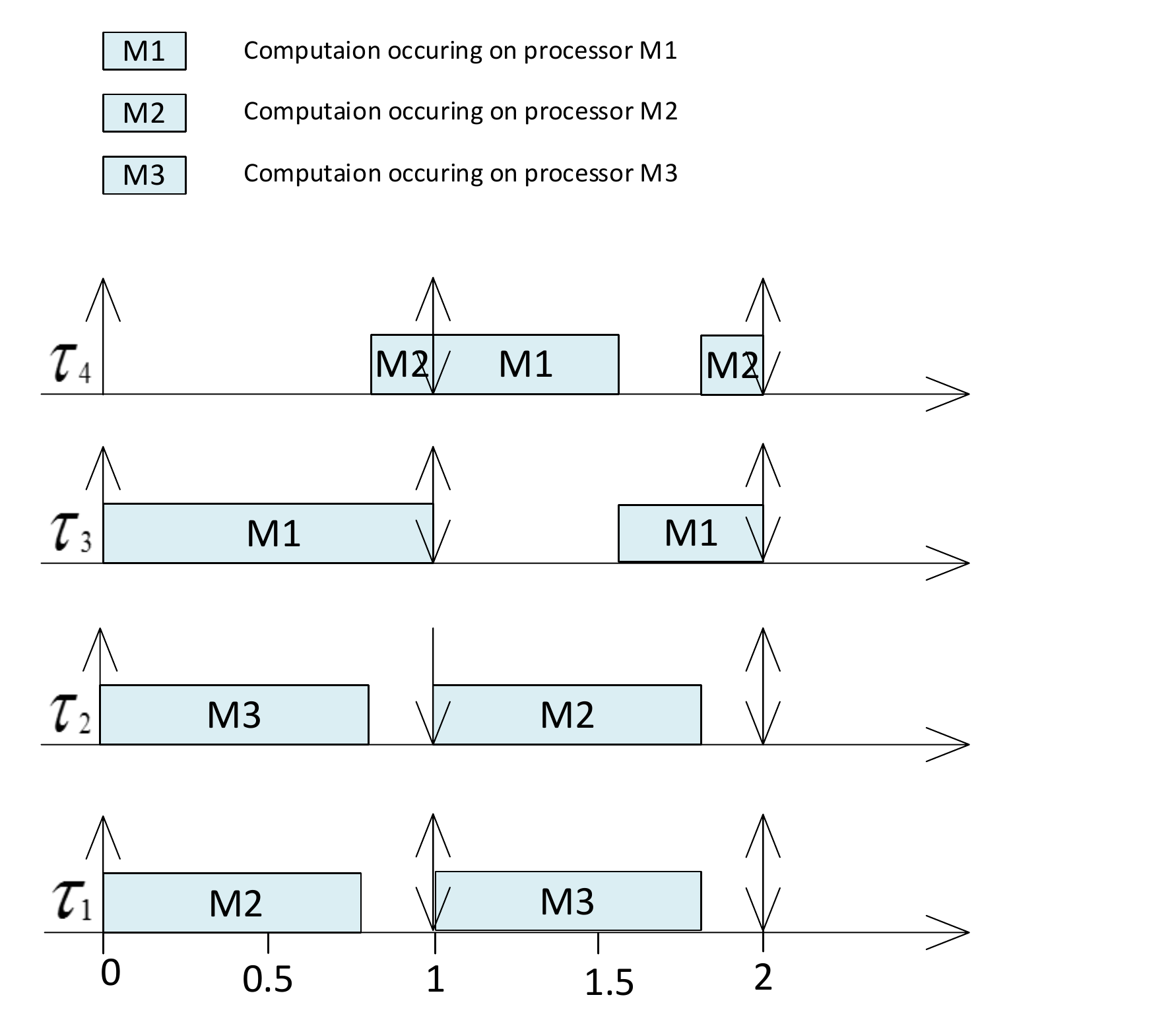} 
	\end{center} 
\vspace{-5mm}
\caption{\small GEDF-H schedule of the tasks in example 1.}
\vspace{-5mm}
\label{fig:GEDFH}
\end{figure}

Fig.\ref{fig:GEDFH} shows the GEDF-H schedule of the task system in example $1$ in time interval $[0, 2]$. At time instant 1, under GEDF-H we move $\tau_{4,1}$ from $M_2$ to $M_1$ in order to execute $\tau_{2,2}$ on $M_2$. 

Next, we derive a schedulability test for preemptive GEDF-H. For conciseness, we use GEDF-H to represent the preemptive scheduler in the following sections. Due to space constraints and the fact that the analysis for non-preemptive GEDF-H (NP-GEDF-H) is similar, we only provide a proof sketch for analyzing schedulability under NP-GEDF-H in an appendix.

\section{Schedulability Analysis for GEDF-H}
\label{sec:EDF}

We now present our preemptive GEDF-H schedulability analysis. Our analysis draws inspiration from the seminal work of Devi \cite{Devi}, and follows the same general framework. Here are the essential steps.

Let $\tau_{i,j}$ be a job of task $\tau_i$ in $\tau$, $t_d = d_{i,j}$, and $S$ be a GEDF-H schedule for $\tau$ with the following assumption.

\vspace{1.5mm}
\textbf{(A)} The response time of every job $\tau_{l,k}$, where $\tau_{l,k}$ has higher priority than $\tau_{i,j}$, is at most $x+2 \cdot  p_l$ in $S$, where $x \geq 0$.
\vspace{1.5mm}

Our objective is to find out that under which condition we could determine an $x$ such that the response time of $\tau_{i,j}$ is at most $x+2 \cdot p_i$. If we can find such $x$, by induction, this implies a response time of at most $x +2 \cdot p_l$ for all jobs of every task $\tau_l$, where $\tau_l \in \tau$. We assume that $\tau_{i,j}$ finishes after $t_d$, for otherwise, its response time is trivially equals to its period. The steps for determining the value for $x$ are as follows.
\vspace{-1.5mm}
\begin{enumerate}
\item
Determine a lower bound on the amount of work pending for tasks in $\tau$ that can compete with $\tau_{i,j}$ after $t_d$, required for the response time of $\tau_{i,j}$ to exceed $x+2 \cdot p_i$. This is dealt with in Lemma~\ref{lower_bound} in Sec.~\ref{sec:lower_bound}.
\item \vspace{-1.5mm}
Determine an upper bound on the work pending for tasks in $\tau$ that can compete with $\tau_{i,j}$ after $t_d$. This is dealt with in Lemmas~\ref{lemma:Upper_lemma1} and \ref{lemma:Upper_lemma2} in Sec.~\ref{sec:upper_bound}.
\item \vspace{-1.5mm}
Determine the smallest $x$ such that the response time of $\tau_{i,j}$ is at most $x+2 \cdot p_i$, using the above lower and upper bounds. This is dealt with in Theorem~\ref{theorem:SRTtest} in Sec.~\ref{sec:x}.
\end{enumerate} 
\vspace{-2.5mm}
\begin{definition}
\label{def:active}
A task $\tau_i$ is \textit{active} at time $t$ if there exists a job $\tau_{i,v}$ such that $r_{i,v} \leq t < d_{i,v}$.
\end{definition}

\begin{definition}
\label{def:enabled}
A job is considered to be \textit{completed} if it has finished its execution. We let $f_{i,v}$ denote the completion time of job $\tau_{i,v}$. Job $\tau_{i,v}$ is \textit{tardy} if it completes after its deadline.
\end{definition}

\begin{definition}
\label{def:pending}
 Job $\tau_{i,v}$ is \textit{pending} at time $t$ if $r_{i,v} < t < f_{i,v}$. Job $\tau_{i,v}$ is \textit{enabled} at $t$ if $r_{i,v} \leq t < f_{i,v}$, and its predecessor (if any) has completed by $t$.  %$\tau_{i,v}$ has not completed all of its execution phases by $t$. Note that $\tau_{i,v}$ is not pending at $t$ if it has completed all its execution phases by $t$ but not all of its suspension phases.
\end{definition}

%The above definitions are illustrated in Fig.~\ref{fig:readydef}.

%\begin{figure}[t] 
	%\begin{center}
       %   \includegraphics[width=3.3in]{images/readydef.pdf} 
          %\end{center}
%\vspace{-3mm}
%\caption{\small Illustration of Defs.~\ref{def:active}-\ref{def:enabled}.}
%\vspace{-3mm}
%\label{fig:readydef}
%\end{figure}
%\normalsize

\begin{definition}
\label{preemption} If an enabled job $\tau_{i,v}$ dose not execute at time $t$, then it is \textit{preempted} at $t$. 
\end{definition}

\begin{definition}
\label{def:d}
We categorize jobs based on the relationship between their priorities and those of $\tau_{i,j}$:
\begin{center}
$\textbf{d} = \lbrace {\tau_{l,v}: (d_{l,v} < t_d) \vee (d_{l,v}=t_d \wedge  l \leq i)} \rbrace$.
\end{center}
\end{definition}
Thus, \textbf{d} is the set of jobs with priority no less than that of $\tau_{i,j}$, including $\tau_{i,j}$.
\begin{definition}
\label{def:PS}
For any given sporadic task system $\tau$, a \textit{processor share} (PS) schedule is an ideal schedule where each task $\tau_i$ executes with a speed equal to $u_i$ when it is active (which ensures that each of its jobs completes exactly at its deadline). A valid PS schedule exists for $\tau$ if $U_{sum} \leq R_{sum}$ holds. 

Fig.~\ref{fig:PSschedule} shows the PS schedule of the tasks in Example 1. Note that the PS schedules on a homogeneous multiprocessor and a heterogeneous multiprocessor are identical.
\end{definition}

\begin{figure}[t]
	\begin{center}
	\includegraphics[width=3in]{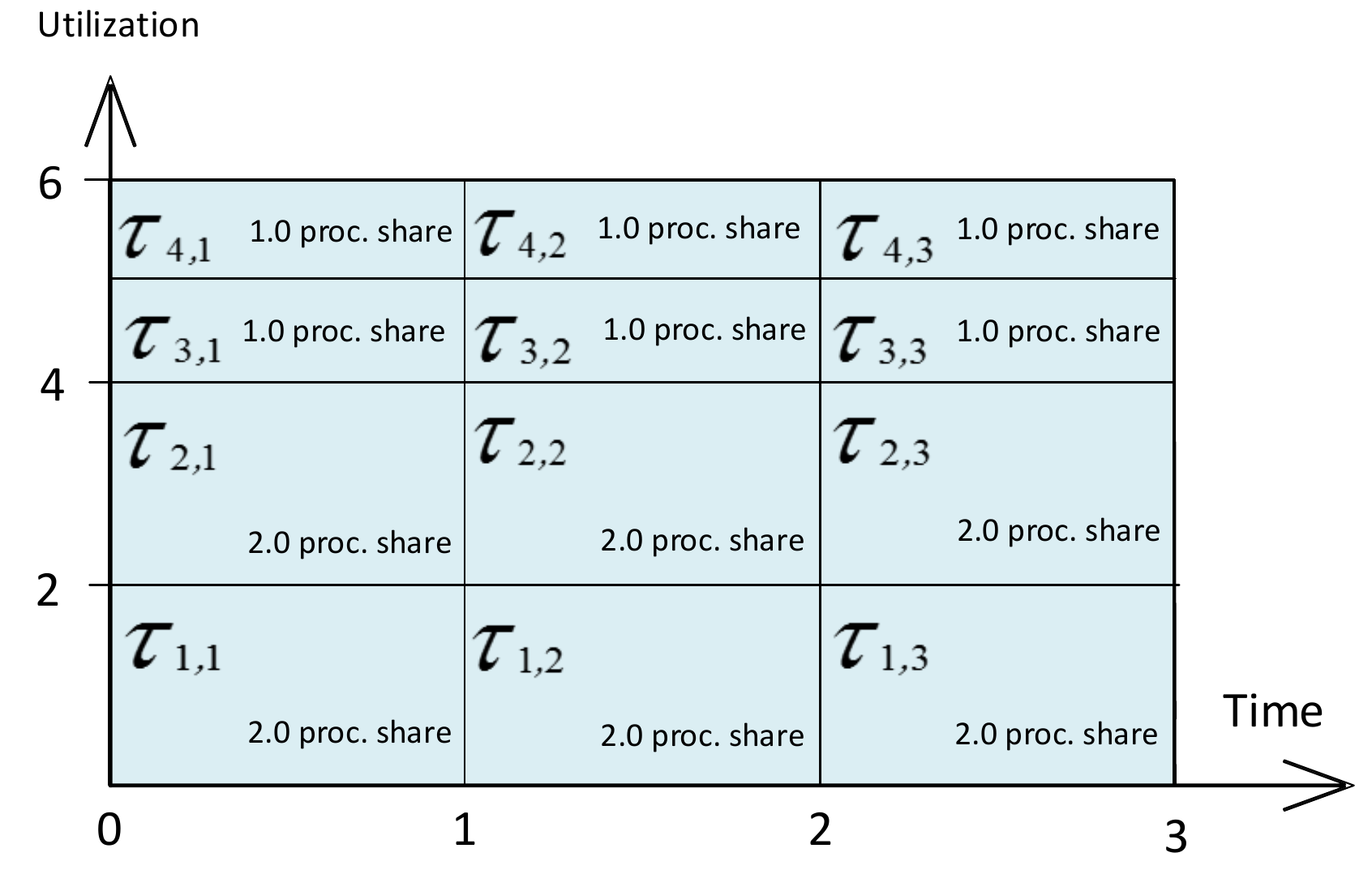} 
	\end{center} 
\vspace{-5mm}
\caption{\small PS schedule of the tasks in example 1.}
\vspace{-5mm}
\label{fig:PSschedule}
\end{figure}

By Def.~\ref{def:d}, $\tau_{i,j}$ is in $\textbf{d}$. Also jobs not in $\textbf{d}$ have lower priority than those in $\textbf{d}$ and thus do not affect the scheduling of jobs in $\textbf{d}$. For simplicity, in the rest of the paper, we only consider jobs in $\textbf{d}$ in either the GEDF-H schedule $S$ or the corresponding PS schedule.%To avoid distracting ``boundary cases,'' we also assume that the schedule being analyzed is prepended with a schedule in which no deadlines are missed that is long enough to ensure that all previously released jobs referenced in the proof exist. 

Our schedulability test is obtained by comparing the allocations to $\textbf{d}$ in the GEDF schedule $S$ and the corresponding PS schedule, both on $m$ processors, and quantifying the difference between the two. We analyze task allocations task by task. Let $A(\tau_{i,v}, t_1, t_2, S)$ denote the total workload allocation to job $\tau_{i,v}$ in $S$ in $[t_1, t_2)$. Then, the total workload done by all jobs of $\tau_i$ in $[t_1, t_2)$ in $S$ is given by \[ A(\tau_i, t_1, t_2, S)=\sum_{v \geq 1} A(\tau_{i,v}, t_1, t_2, S). \]

%In such a schedule, $\tau_i$ executes with a certain rate when it is active to make sure that each job completes exactly at its deadline. (\textit{Note that suspensions are not considered in the PS schedule}.) A valid \textit{PS} schedule exists for $\tau$ if $U_{sum} \leq m$ holds. 

Let \textit{PS} denote the PS schedule that corresponds to the GEDF-H schedule $S$ (i.e., the total allocation to any job of any task in \textit{PS} is identical to the total allocation of the job in $S$). 

The difference between the allocation to a job $\tau_{i,v}$ up to time $t$ in \textit{PS} and $S$, denoted \textit{the lag of job $\tau_{i,v}$ at time $t$ in schedule $S$}, is defined by
\begin{equation*}
lag(\tau_{i,v}, t, S) = A(\tau_{i,v}, 0, t, PS) - A(\tau_{i,v}, 0, t, S).
\end{equation*}
Similarly, the difference between the allocation to a task $\tau_{i}$ up to time $t$ in  \textit{PS} and $S$, denoted \textit{the lag of task $\tau_{i}$ at time $t$ in schedule $S$}, is defined by\small
\begin{eqnarray}
\label{eq:lag for task}
lag(\tau_i, t, S) \hspace{-1.5mm} & = & \hspace{-2mm} \sum_{v \geq 1} lag(\tau_{i,v}, t, S) \nonumber \\
                \hspace{-1.5mm} & = & \hspace{-1.5mm} \sum_{v \geq 1} \left(A(\tau_{i,v}, 0, t, PS) - A(\tau_{i,v}, 0, t, S)\right)\hspace{-0.5mm}.
\end{eqnarray}
\normalsize

The \textit{LAG} for $\textbf{d}$ at time $t$ in schedule $S$ is defined as
\begin{equation}
\label{eq:LAG for task set}
LAG(\textbf{d}, t, S) = \sum_{\tau_i: \tau_{i,v} \in \textbf{d}} lag(\tau_i, t, S).
\end{equation}
%Also, since the lag for any completed job is 0, we have 
%\begin{equation}
%\label{eq:LAGJ=sumlagT}
%LAG(J, t, S)  = \sum_{\tau_i: \tau_{i,j} \in J} lag(\tau_i, t, S).
%\end{equation}

%The total system lag of a task set $\tau$ at time $t$, denoted $LAG(\tau,t,S)$, is given by the following.
%\begin{equation}
%\label{eq:LAG for task set}
%LAG(\tau, t, S) = \sum_{\tau_{i} \in \tau} lag(\tau_{i}, t, S).
%\end{equation}

\begin{definition}
A time instant $t$ is \textit{busy} (resp. \textit{non-busy}) for a job set $J$ if there exists (resp. does not exist) an $\varepsilon > 0 $ that all $m$ processors execute jobs in $J$ during $(t,t+ \varepsilon) $. A time interval $[a,b)$ is \textit{busy} (resp. \textit{non-busy}) for $J$ if each (resp. \textit{not all}) instant within $[a,b)$ is busy for $J$. %A time instant $t$ is \textit{busy on processor $M_k$} (resp. \textit{non-busy on processor $M_k$}) for $J$ if $M_k$ executes (resp. does not execute) a job in $J$ at $t$. A time interval is \textit{busy on processor $M_k$} (resp. \textit{non-busy on processor $M_k$}) for $J$  if each instant within it is busy (resp., non-busy) on $M_k$ for $J$.
\end{definition}

The following properties follows from the definitions above.

\begin{changemargin}{4mm}{4mm}
\textbf{(P1)} If $LAG(\textbf{d}, t_2, S) > LAG(\textbf{d}, t_1, S)$, where $t_2 > t_1$, then $[t_1, t_2$) is non-busy for $\textbf{d}$. In other words, \textit{LAG} for $\textbf{d}$ can increase only throughout a non-busy interval for $\textbf{d}$ . 
\end{changemargin}

\begin{changemargin}{4mm}{4mm}
\textbf{(P2)} At any non-busy time instant $t$, at most $m-1$ tasks can have pending jobs at $t$, for otherwise $t$ would have to become busy. 
\end{changemargin}

%\vspace{-1mm}
%\begin{changemargin}{4mm}{4mm}
%\textbf{(P3)} Let $\tau_{i,j}$ be a job in task $\tau_{i}$ with deadline $d_{i,j}$. If task $\tau_{i}$ dose not have any pending job at $d_{i,j}$, then $lag(\tau_i, t, S) \leq 0$.
%\end{changemargin}
%\vspace{-1mm}
%An interval could be \textit{non-busy} for $\textbf{d}$ only if there are not enough enabled jobs in $\textbf{d}$ to occupy (i.e., execute on) all available processors.

\subsection{Lower Bound}
\label{sec:lower_bound}

Lemma~\ref{lower_bound} below provides the lower bound on $LAG(\textbf{d}, t_d, {S})$.
%Some additional notation is required to state this lemma.

\begin{lemma}
\label{lower_bound}
If $LAG(\textbf{d}, t_d, {S}) \leq R_{sum} \cdot x +p_i$ and Assumption \textbf{(A)} holds, then the response time of $\tau_{i,j}$ is at most $x + 2 \cdot p_i$,
\end{lemma}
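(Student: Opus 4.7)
\textbf{Proof plan for Lemma~\ref{lower_bound}.} The natural approach is proof by contradiction. Assume $\tau_{i,j}$ misses the bound, i.e., $f_{i,j} > r_{i,j} + x + 2p_i = t_d + x + p_i$, and show that this forces the workload that $S$ must perform on $\textbf{d}$ during $[t_d, t_d + x + p_i)$ to strictly exceed $R_{sum}\cdot x + p_i$, which contradicts the hypothesis on $LAG(\textbf{d}, t_d, S)$. To link the two sides of the argument, I would first observe that every job in $\textbf{d}$ has deadline at most $t_d$, so in the PS schedule each such job has received its full workload by $t_d$. Therefore $LAG(\textbf{d}, t_d, S)$ is exactly the total amount of $\textbf{d}$-work that remains to be executed in $S$ after $t_d$, and in particular upper-bounds the $\textbf{d}$-workload performed by $S$ in any window starting at $t_d$.

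Next, I would use Assumption \textbf{(A)} on the predecessor: since $r_{i,j-1} \leq t_d - 2p_i$ and the response time of $\tau_{i,j-1}$ is at most $x + 2p_i$, we have $f_{i,j-1} \leq t_d + x$, so $\tau_{i,j}$ is enabled throughout $[t_d+x, t_d + x + p_i)$ (a window of length exactly $p_i$). I would then partition $[t_d, t_d + x + p_i)$ into its busy and non-busy subintervals for $\textbf{d}$, with aggregate lengths $t_b$ and $t_n$ satisfying $t_b + t_n = x + p_i$. On busy time, all $m$ processors run $\textbf{d}$-jobs and work accrues at rate exactly $R_{sum}$. On non-busy time, P2 and the GEDF-H priority rule guarantee that the currently enabled $\tau_i$-job (either $\tau_{i,j}$ itself after $t_d+x$, or some still-pending predecessor before that point) is executing, and by Property \textbf{(P0)} it runs on a processor of speed at least $v_i \geq u_i$, so the $\textbf{d}$-rate is at least $v_i$. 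Hence the workload done on $\textbf{d}$ in the window is at least
\begin{equation*}
R_{sum}\, t_b + v_i\, t_n \;=\; R_{sum}(x + p_i) - (R_{sum} - v_i)\, t_n.
\end{equation*}

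To finish, I need to bound $t_n$ from above tightly enough to push this expression past $R_{sum}\, x + p_i$. The lever is that $\tau_{i,j}$ has total workload $e_i = u_i p_i \leq v_i p_i$, so the time $\tau_{i,j}$ can execute at speed $\geq v_i$ is at most $p_i$; combined with the fact that the ``not-enabled'' part is confined to $[t_d, t_d + x)$, this restricts how much of the interval can be non-busy without the $\tau_i$-chain completing early, producing the needed sharpening of $t_n$. I expect the main obstacle to be exactly this last estimate in the heterogeneous setting: unlike Devi's homogeneous analysis where the non-busy rate is cleanly the unit rate $1$, here the rate $v_i$ depends on which ``speed tier'' $\tau_i$ requires, and one must juggle the two sub-intervals (before and after $t_e$) together with the implicit-deadline identity $e_i = u_i p_i$ and the inequality $v_i \geq u_i$ from \textbf{(P0)} to ensure that the slack $(R_{sum} - v_i)t_n$ subtracted above stays strictly smaller than $R_{sum} p_i - p_i$, yielding the desired strict inequality.
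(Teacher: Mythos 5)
Your overall framework---compare the work $S$ must do on $\textbf{d}$ in a window after $t_d$ against the $LAG$ budget, splitting the window into busy time (drained at rate $R_{sum}$) and non-busy time (where the enabled $\tau_i$-job runs at speed $\geq v_i \geq u_i$ by \textbf{(P0)} and \textbf{(P2)})---contains the right ingredients, and in the sub-case where $\tau_{i,j-1}$ has completed by $t_d$ your accounting does close: there $t_n < e_i/v_i$, and using $e_i = u_i p_i \leq v_i p_i$ and $v_i \geq \alpha_1 = 1$ the deficit $(R_{sum}-v_i)t_n$ stays below $(R_{sum}-1)p_i$, as required. The genuine gap is exactly the step you flag as ``the main obstacle,'' and it does not go through in the other sub-case. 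When $\tau_{i,j-1}$ is still pending at $t_d$ (so $\eta_{i,j}=0$), the non-busy time splits into a portion where the predecessor runs and a portion where $\tau_{i,j}$ runs; each can be as large as $e_i/v_i \approx p_i$, so $t_n$ can approach $2p_i$ and the deficit $(R_{sum}-v_i)t_n$ can approach $2(R_{sum}-v_i)p_i$, which exceeds the available slack $(R_{sum}-1)p_i$ unless $2v_i \geq R_{sum}+1$. Knowing that the predecessor part is confined to $[t_d, t_d+x)$ does not rescue the estimate, because the deficit accrued there is charged against the same $p_i$ of slack. So the aggregate rate-counting over the whole window $[t_d, t_d+x+p_i)$ cannot, by itself, yield the contradiction.

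The paper closes this case without any bound on total non-busy time. It sets $y = x + \eta_{i,j}/R_{sum}$ and asks only whether $[t_d, t_d+y)$ is entirely busy. If yes, the residual work at $t_d+y$ is at most $p_i - \eta_{i,j}$ and finishes even on a unit-speed processor by $t_d+x+p_i$. If not, it takes the \emph{earliest} non-busy instant $t_s$ and invokes the structural property \textbf{(P3)}: from $t_s$ on, at most $m-1$ tasks of $\textbf{d}$ are pending, so no job in $\textbf{d}$ is ever preempted again. Then the argument is a direct completion-time chase rather than work accounting: if $\tau_{i,j}$ is running at $t_s$ it finishes by $t_s + (e_i-\eta_{i,j})/v_i \leq t_d+x+p_i$; otherwise $\eta_{i,j}=0$, Assumption \textbf{(A)} gives $f_{i,j-1}\leq t_d+x$, and uninterrupted execution thereafter gives $f_{i,j}\leq f_{i,j-1}+e_i/v_i \leq t_d+x+p_i$. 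To repair your proof you would need to import this local, post-$t_s$ no-preemption argument for the predecessor case; once you do, the global busy/non-busy decomposition becomes unnecessary and you have essentially reconstructed the paper's proof.
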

\begin{proof}
Let $\eta_{i,j}$ be the amount of work $\tau_{i,j}$ performs by time $t_d$ in ${S}$, $0 \leq \eta_{i,j} < e_i$. Define $y$ as follows.

\begin{equation}\vspace{-1mm}
\label{eq:y}
y = x + \dfrac{\eta_{i,j}}{R_{sum}}
\end{equation}
We consider two cases.% depending on whether $[t_d,t_d+y)$ is busy. %Let $W$ be the amount of work due to jobs in $\textbf{d}$ that can compete with $\tau_{l,j}$ at or after $t_d +y$, including the work due for $\tau_{l,j}$.   

\vspace{1mm}
\textbf{Case 1.} \textit{$[t_d, t_d + y)$ is a busy interval for} $\textbf{d}$. In this case, the amount of work completed in $[t_d,t_d+y)$ is exactly $\sum_{i=1}^{p}\alpha_{i} \cdot M_{i} \cdot y = R_{sum} \cdot y$, as illustrated in Fig.4. Hence, the amount of work pending at $t_d+y$ is at most $LAG(\textbf{d}, t_d, {S}) - R_{sum} \cdot y \leq R_{sum} \cdot x + p_i - R_{sum} \cdot x - \eta_{i,j}  = p_i -\eta_{i,j}$. This remaining work will be completed(even on a slowest processor), no later than $t_d+y+p_i -\eta_{i,j} = t_d + x + \dfrac{\eta_{i,j}}{R_{sum}} +p_i -\eta_{i,j} \leq t_d+x+p_i$. Since this remaining work includes the work due for $\tau_{i,j}$, $\tau_{i,j}$ thus completes by $t_d+x+p_i$. The response time of $T_{i,j}$ is thus not more than $t_d+x+p_i-r_{i,j}= x+2 \cdot p_i$. 

\begin{figure}[t]
	\begin{center}
	\includegraphics[width=3in]{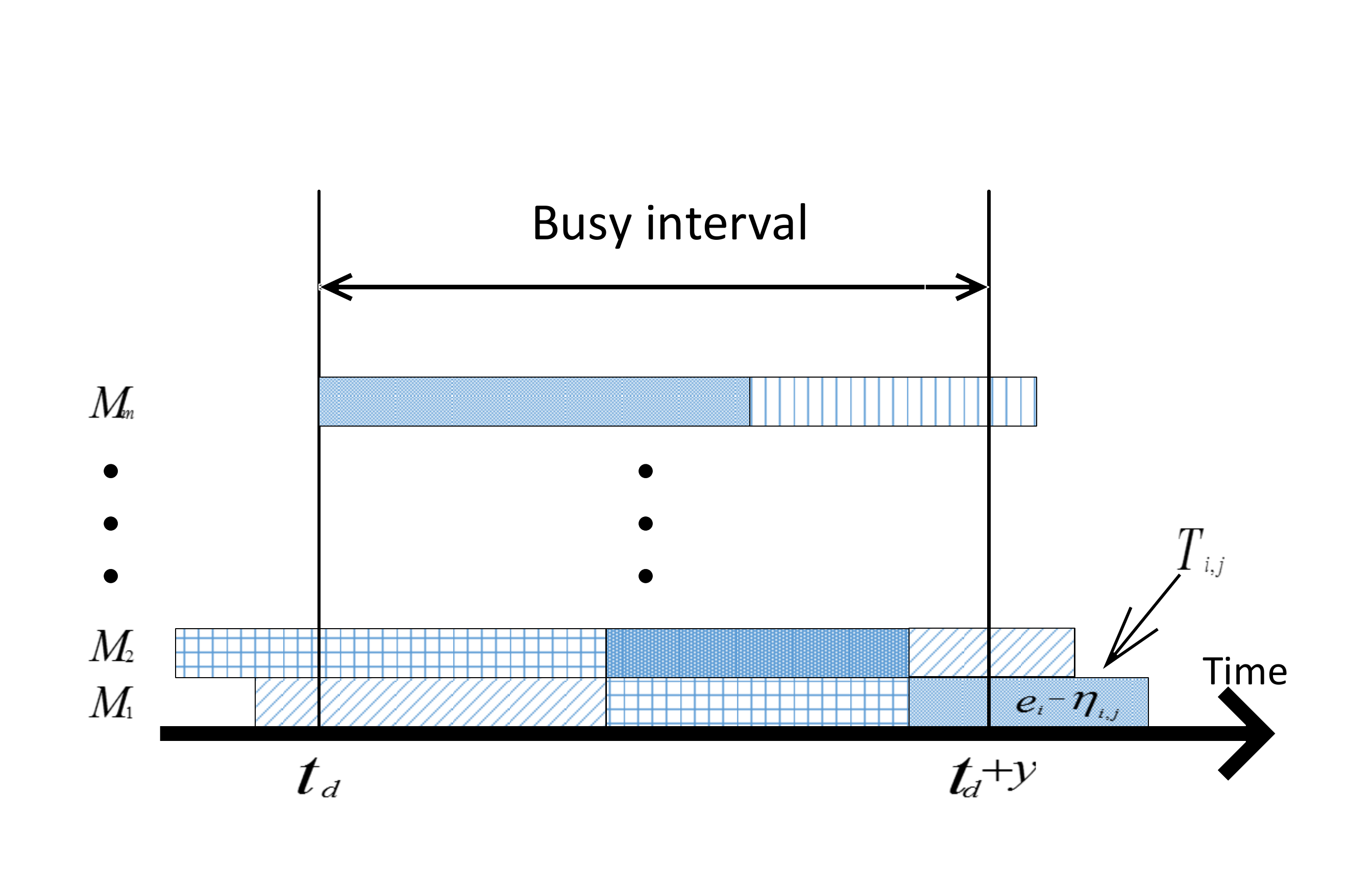} 
	\end{center} 
\vspace{-2mm}
\caption{\small $[t_d, t_d+y)$ is a busy interval.}
\vspace{-2mm}
\label{fig:busy}
\end{figure}

%This implies that no work will compete with $\tau_{l,j}$ after $t_d+y$. Since $\tau_{l,j}$ can execute and suspend for at most $e_l+s_l-\eta_l$ time units in total after $t_d + y$, we have $t_f \leq t_d+y+e_l+s_l-\eta_l=t_d+x+\dfrac{\eta_l}{m}+e_l+s_l-\eta_l\leq t_d+x+e_l+s_l$.

\vspace{1mm}
\textbf{Case 2.} \textit{$[t_d, t_d + y)$ is a non-busy interval for} $\textbf{d}$. Let $t_s$ be the earliest non-busy instant in $[t_d, t_d + y$), as illustrated in Fig.5. By Property \textbf{(P2)}, at most $m-1$ tasks can have pending jobs in $\textbf{d}$ at $t_s$. Moreover, since no jobs in $\textbf{d}$ can be released after $t_d$, we have
\vspace{-3mm}
\begin{changemargin}{8mm}{8mm}
\textbf{(P3)} At most $m-1$ tasks have pending jobs  in $\textbf{d}$ at or after $t_s$. This implies no job would be preempted at or after $t_s$.
\end{changemargin}\vspace{-3mm}

\noindent If $\tau_{i,j}$ is executing at $t_s$, then, by property \textbf{(P3)} and \textbf{(P0)}, we have

\begin{eqnarray*}
f_{i,j}  &{\leq}& t_s + \frac{e_i-\eta_{i,j}}{v_i}  \\
&&{\lbrace \rm{by}~(\ref{eq:GEDF-H}) \rbrace}\\ 
&{\leq}& t_d + y +\frac{e_i-\eta_{i,j}}{u_i} \\
&&{\lbrace \rm{by}~(\ref{eq:y}) \rbrace} \\
&{\leq}& t_d + x+  \frac{\eta_{i,j}}{R_{sum}}+\frac{e_i-\eta_{i,j}}{u_i} \\
&{\leq}& t_d + x + \frac{e_i}{u_i} \\
& = & t_d+x+p_i.
\end{eqnarray*}

%  $f_{i,j} \leq t_s + \frac{e_i-\eta_{i,j}}{v_i} \stackrel{\lbrace \rm{by}~(\ref{eq:GEDF-H}) \rbrace}{\leq}  t_d + y +\frac{e_i-\eta_{i,j}}{u_i} \stackrel{\lbrace \rm{by}~(\ref{eq:y}) \rbrace}{\leq} t_d + x+  \frac{\eta_{i,j}}{R_{sum}}+\frac{e_i-\eta_{i,j}}{u_i} \leq t_d + x + \frac{e_i}{u_i} = t_d+x+p_i$. 
Thus, the response time of $T_{i,j}$ is not more than $f_{i,j}-r_{i,j}= f_{i,j}-t_d+p_i \leq x+ 2 \cdot p_i$.

Else, $\tau_{i,j}$ is not executing at $t_s$ and $\eta_{i,j}= 0$, which means the predecessor job $\tau_{i,j-1}$ has not completed by $t_s$. Because $d_{i,j-1}$ = $t_d-p_{i}$, by Assumption \textbf{(A)}, $f_{i,j-1} \leq r_{i,j-1} + x+ 2 \cdot p_i =d_{i,j-1}-p_i+x+2 \cdot p_i = t_d+x$. Thus, combined with property \textbf{(P3)} and \textbf{(P0)}, $f_{i,j}\leq f_{i,j-1}+\frac{e_i}{v_i}\leq t_{d}+x+\frac{e_i}{v_i} 
\stackrel{\lbrace \rm{by}~(\ref{eq:GEDF-H}) \rbrace}{\leq}
t_{d}+x+\frac{e_i}{u_i}= t_d+x+p_i$. The response time of $\tau_{i,j}$ is thus not more than $x+ 2 \cdot p_i$. \qedhere

\begin{figure}[t]
	\begin{center}
	\includegraphics[width=3in]{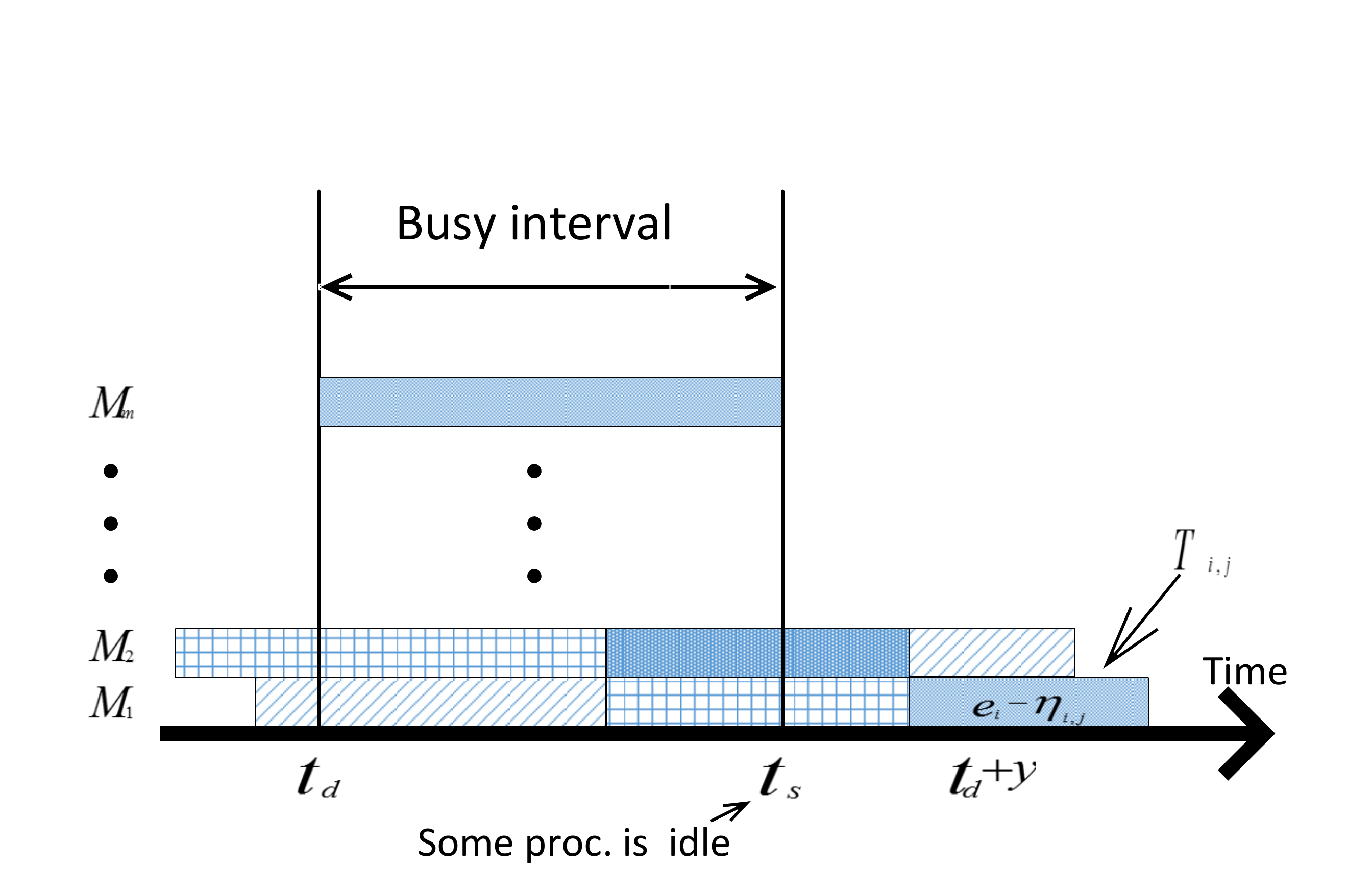} 
	\end{center} 
\vspace{-2mm}
\caption{\small $[t_d, t_d+y)$ is a non-busy interval.}
\vspace{-2mm}
\label{fig:nonbusy}
\end{figure}

\end{proof}

\subsection{Upper Bound}
\label{sec:upper_bound}

In this section, we determine an upper bound on $LAG(\textbf{d}, t_d, {S})$.

\begin{definition}
Let $t_n \leq t_d$ be the latest non-busy instant by $t_d$ for \textbf{d}, if any; otherwise, $t_n = 0$.
\end{definition}

By the above definition and Property \textbf{(P1)}, we have 
\begin{equation}
\label{eq:LAGincrease}
LAG(\textbf{d}, t_d, {S}) \leq LAG(\textbf{d}, t_n, {S}).
\end{equation}

\begin{lemma}
\label{lemma:Upper_lemma1}
For any task $\tau_i$, if $\tau_i$ has pending jobs at $t_{n}$ in the schedule ${S}$, then we have
\small
\[
 lag(\tau_i, t_n, {S}) \leq
  \begin{cases}
  e_i  &  \hspace{0mm} \hspace{-2.5mm} \text{if } d_{i,k} \geq t_n \\
   u_i \cdot x + e_i+u_i \cdot (p_i- \frac{e_i}{\alpha_{max}})   & \hspace{0mm} \hspace{-2.5mm} \text{if } d_{i,k} < t_{n}
  \end{cases}
\] \normalsize
where $d_{i,k}$ is the deadline of the earliest released pending job of $\tau_i$, $\tau_{i,k}$, at time $t_{n}$ in ${S}$. 
\end{lemma}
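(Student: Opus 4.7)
The plan is to decompose $lag(\tau_i, t_n, S)$ job-by-job and show that essentially only the earliest released pending job $\tau_{i,k}$ and, in Case 2, its PS-active successors contribute. First, for every $v < k$, $\tau_{i,v}$ has completed in $S$ by $t_n$, and since $d_{i,v} \leq r_{i,k} \leq t_n$, PS has also completed it, so such jobs contribute zero to the lag. Moreover, by the in-order execution requirement on successive jobs of a task, since $\tau_{i,k}$ is still pending at $t_n$, no job with index $v > k$ has received any work in $S$. Writing $\eta_k$ for the work done on $\tau_{i,k}$ in $S$ by $t_n$, this reduces the lag to
\[
lag(\tau_i, t_n, S) = \Bigl(\sum_{v \geq k} A(\tau_{i,v}, 0, t_n, PS)\Bigr) - \eta_k,
\]
and the PS sum equals $u_i$ times the active time of $\tau_i$ in $[r_{i,k}, t_n)$ in PS, which is at most $u_i (t_n - r_{i,k})$. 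Case 1 ($d_{i,k} \geq t_n$) is then immediate: the sporadic separation $r_{i,k+1} \geq r_{i,k} + p_i = d_{i,k} \geq t_n$ shows that $\tau_{i,k}$ is in fact the only pending job of $\tau_i$, $t_n - r_{i,k} \leq p_i$, and so $lag \leq u_i p_i = e_i$.

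For Case 2 ($d_{i,k} < t_n$), I invoke Assumption \textbf{(A)}. Since $d_{i,k} < t_n \leq t_d$, the job $\tau_{i,k}$ lies in $\textbf{d}$ and has higher priority than the reference job whose response time we are bounding, so its response time satisfies $f_{i,k} - r_{i,k} \leq x + 2 p_i$. The naive plug-in $t_n - r_{i,k} < x + 2 p_i$ would yield only $lag \leq u_i x + 2 e_i$. To get the stated tighter bound I push further: the $e_i - \eta_k$ workload still pending on $\tau_{i,k}$ at $t_n$ needs at least $(e_i - \eta_k)/\alpha_{max}$ time on any processor, so $f_{i,k} \geq t_n + (e_i - \eta_k)/\alpha_{max}$. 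Combining with Assumption \textbf{(A)} gives
\[
t_n - r_{i,k} \leq x + 2 p_i - (e_i - \eta_k)/\alpha_{max},
\]
and substituting into the lag bound, using $u_i p_i = e_i$ together with $u_i \leq \alpha_{max}$ to drop the resulting nonpositive term $\eta_k (u_i/\alpha_{max} - 1)$, produces exactly $lag \leq u_i x + e_i + u_i (p_i - e_i/\alpha_{max})$.

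The main obstacle I anticipate is the active-time bookkeeping in Case 2 when $\tau_i$ exhibits back-to-back releases and accumulates multiple pending jobs: I must argue carefully that the PS allocation to the set of pending jobs $\{\tau_{i,v} : v \geq k\}$ telescopes into $u_i (t_n - r_{i,k})$ with no double counting, regardless of how many such jobs there are. The essential conceptual point, and the reason the bound improves by $u_i e_i / \alpha_{max}$ over the naive analysis, is that one must combine the minimum processing time $e_i/\alpha_{max}$ of $\tau_{i,k}$'s remaining workload with Assumption \textbf{(A)}, rather than applying Assumption \textbf{(A)} directly to $t_n - r_{i,k}$.
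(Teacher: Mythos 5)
Your proposal is correct and follows essentially the same route as the paper's proof: decompose the lag over jobs $\tau_{i,v}$ with $v \geq k$, bound the PS allocation by $u_i$ times the elapsed active time, and in Case 2 combine Assumption \textbf{(A)} with the observation that the residual work $e_i - \eta_k$ needs at least $(e_i-\eta_k)/\alpha_{max}$ time even on the fastest processor, then drop the nonpositive term $\eta_k(u_i/\alpha_{max}-1)$. The "active-time bookkeeping" you worry about is unproblematic since implicit deadlines make the PS active windows $[r_{i,v}, d_{i,v})$ pairwise disjoint, which is exactly how the paper bounds $\sum_{h>k} A(\tau_{i,h}, r_{i,h}, t_n, PS)$ by $u_i\cdot\max(0, t_n - d_{i,k})$.
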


\begin{proof}
Let $\gamma_{i,k}(\gamma_{i,k} < e_i)$be the amount of work $\tau_{i,k}$ performs before $t_{n}$.

By the selection of $\tau_{i,k}$, we have $lag(\tau_i, t_n, {S}) = \sum_{h \geq k} lag(\tau_{i,h}, t_n, {S})  = \sum_{h \geq k} \big(A(\tau_{i,h}, 0, t_n,PS) - A(\tau_{i,h}, 0, t_n, {S})\big)$. By the definition, $A(\tau_{i,h}, 0, t_n, {S}) = A(\tau_{i,h}, r_{i,h}, t_n, {S})$. Thus,
\begin{eqnarray}
\label{summinglemma2}
&& lag(\tau_i, t_n, {S}) \nonumber \\
 &= & A(\tau_{i,k}, r_{i,k}, t_n, PS) - A(\tau_{i,k}, r_{i,k}, t_n, {S}) \nonumber \\
& & + \sum_{h >k} \big(A(\tau_{i,h}, r_{i,h}, t_n, PS) \nonumber \\
& & - A(\tau_{i,h}, r_{i,h}, t_n, {S}) \big).
\end{eqnarray}

By the definition of $PS$, $A(\tau_{i,k}, r_{i,k},t_n, PS) \leq e_i$, and $\sum_{h > k} A(\tau_{i,h}, r_{i,h}, t_n, PS)\leq {u_i} \cdot \max(0,t_n - d_{i,k})$. By the selection of $\tau_{i,k}$, $A(\tau_{i,k}, r_{i,k}, t_n, {S}) = \gamma_{i,k}$, and $\sum_{h > k} A(\tau_{i,h}, r_{i,h}, t_n, {S}) = 0$. By setting these values into (\ref{summinglemma2}), we have
\begin{equation}
\label{lemma3case1-}
lag(\tau_i, t_n, {S}) \leq e_i - \gamma_{i,k} + u_i \cdot \max(0,t_n - d_{i,k}).
\end{equation}

There are two cases to consider.

\vspace{1mm}
\textbf{Case 1.} $d_{i,k} \geq t_n$.
In this case, (\ref{lemma3case1-}) implies $lag(\tau_i, t_n, {S}) \leq e_i - \gamma_{i,k} \leq e_i$.

\vspace{2mm}
\textbf{Case 2.} $d_{i,k} < t_n$. In this case, because $t_n \leq t_d$ and $d_{l,j}=t_d$, $\tau_{i,k}$ is not the job $\tau_{l,j}$. Thus, by Assumption \textbf{(A)}, $\tau_{i,k}$ has a response time of at most $x + 2 \cdot p_i$. Since $\tau_{i,k}$ is the earliest pending job of $\tau_i$ at time $t_n$, the earliest possible completion time of $\tau_{i,k}$ is at $t_n+\frac{e_i-\gamma_{i,k}}{\alpha_{z}}$ (executed on the fastest processor). Thus, we have $t_n+\frac{e_i-\gamma_{i,k}}{\alpha_{z}} \leq r_{i,k}+x+2 \cdot p_i = d_{i,k} +x + p_i $, which gives $t_n - d_{i,k} \leq x + \frac{\gamma_{i,k}}{\alpha_{z}}+p_i-\frac{e_i}{\alpha_{z}}$. Setting this value into (\ref{lemma3case1-}), we have $lag(\tau_i, t_n, {S}) \leq e_i-\gamma_{i,k}+u_i \cdot (x + \frac{\gamma_{i,k}}{\alpha_{z}}+p_i-\frac{e_i}{\alpha_{z}}) \leq u_i \cdot x + e_i+ u_i \cdot (p_i- \frac{e_i}{\alpha_{z}}) $.
\end{proof}

\begin{definition}
\label{def:UE}
Let $\overline{U}_{m-1}$ be the sum of the $m-1$ largest $u_i$ values among tasks in $\tau$. Let $\overline{E}$ be the largest value of the expression $ \sum_{\tau_i \in \psi,\alpha_{j} \in \varphi}  (e_i+u_i \cdot (p_i- \frac{e_i}{\alpha_{j}}))$, where $\psi$ denotes any set of $m-1$ tasks in $\tau$ and $\varphi$ denotes the set of speeds of $m-1$ processors that are the $m-1$ fastest processors in the system.
\end{definition}

Lemma~\ref{lemma:Upper_lemma2} below upper bounds $LAG(\textbf{d}, t_d, {S})$.

\begin{lemma}
\label{lemma:Upper_lemma2}
With the Assumption \textbf{(A)}, $LAG(\textbf{d}, t_d, {S}) \leq \overline{U}_{m-1} \cdot x + \overline{E}$.
\end{lemma}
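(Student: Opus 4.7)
My plan is to combine inequality~(\ref{eq:LAGincrease}) with the per-task bound from Lemma~\ref{lemma:Upper_lemma1} and the pigeon-hole fact that very few tasks can have pending jobs at the latest non-busy instant. Concretely, by~(\ref{eq:LAGincrease}) it suffices to upper bound $LAG(\textbf{d}, t_n, {S})$. Expanding the definition in~(\ref{eq:LAG for task set}) gives $LAG(\textbf{d}, t_n, {S}) = \sum_{\tau_i} lag(\tau_i, t_n, {S})$, where the sum ranges over tasks that contribute a job to $\textbf{d}$. I would then split this sum according to whether $\tau_i$ has a pending job at $t_n$ in $S$.

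For a task $\tau_i$ that has \emph{no} pending job at $t_n$, every job $\tau_{i,h}$ released before $t_n$ has already received its full workload $e_i$ in $S$, while in $PS$ the same job has received at most $e_i$ by time $t_n$. Hence each per-job lag is non-positive and $lag(\tau_i, t_n, {S}) \leq 0$, so those tasks only improve the bound and may be discarded. The remaining tasks are precisely those with a pending job at $t_n$; by Property~\textbf{(P2)} there are at most $m-1$ of them. For each of these, Lemma~\ref{lemma:Upper_lemma1} gives a bound of either $e_i$ or $u_i \cdot x + e_i + u_i \cdot (p_i - e_i/\alpha_{max})$. Since $x \geq 0$ and the requirement $u_i \leq \alpha_{max}$ implies $p_i - e_i/\alpha_{max} \geq 0$, the second expression is at least $e_i$, so I can uniformly apply the Case~2 bound to every pending task.

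Summing the Case~2 bound over at most $m-1$ pending tasks yields
\begin{equation*}
LAG(\textbf{d}, t_n, {S}) \;\leq\; x \cdot \sum_{\tau_i \in \psi^\star} u_i \;+\; \sum_{\tau_i \in \psi^\star} \Bigl(e_i + u_i \cdot \bigl(p_i - \tfrac{e_i}{\alpha_{max}}\bigr)\Bigr),
\end{equation*}
where $\psi^\star$ denotes the set of tasks with a pending job at $t_n$, $|\psi^\star| \leq m-1$. The coefficient of $x$ is at most $\overline{U}_{m-1}$ by the definition of $\overline{U}_{m-1}$ as the maximum sum of $m-1$ task utilizations. The constant sum is bounded by $\overline{E}$: the per-task summand $e_i + u_i(p_i - e_i/\alpha_{max})$ uses the fastest available speed $\alpha_{max}$, whereas $\overline{E}$ takes the supremum over pairings of any $m-1$ tasks with the $m-1$ fastest processor speeds, and this supremum is clearly at least the value obtained when every task is paired with $\alpha_{max}$ (itself one of the $m-1$ fastest speeds when $m \geq 2$). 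Chaining with~(\ref{eq:LAGincrease}) then gives $LAG(\textbf{d}, t_d, {S}) \leq \overline{U}_{m-1} \cdot x + \overline{E}$, as required.

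\textbf{Main obstacle.} The subtle step is the justification that non-pending tasks can be dropped from the lag sum; this requires comparing per-job allocations in $PS$ versus $S$ at $t_n$ and using that completed jobs in $S$ have received their full $e_i$. The other small wrinkle is matching the per-task Case~2 bound against the pairing-style definition of $\overline{E}$, which I handle by observing that the $\alpha_{max}$-based bound is dominated by any valid pairing entering the maximization defining $\overline{E}$.
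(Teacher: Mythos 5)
Your proof is correct and follows essentially the same route as the paper's: reduce to the latest non-busy instant $t_n$ via~(\ref{eq:LAGincrease}), invoke Property \textbf{(P2)} to restrict attention to at most $m-1$ tasks with pending jobs, apply Lemma~\ref{lemma:Upper_lemma1} to each, and absorb the resulting sums into $\overline{U}_{m-1}$ and $\overline{E}$. You additionally make explicit two steps the paper leaves implicit --- that tasks with no pending job at $t_n$ have non-positive lag and may be dropped, and that the Case~2 bound of Lemma~\ref{lemma:Upper_lemma1} dominates the Case~1 bound --- which only strengthens the argument.
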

\begin{proof}

By (\ref{eq:LAGincrease}), we have $LAG(\textbf{d}, t_d, {S}) \leq LAG(\textbf{d}, t_n, {S})$. By summing individual task lags at $t_n$, we can bound $LAG(\textbf{d}, t_n, {S})$. If $t_n=0$, then $LAG(\textbf{d}, t_n, {S})=0$, so assume $t_n > 0$.

Given that the instant $t_n$ is non-busy, by Property \textbf{(P2)}, at most $m-1$ tasks can have pending jobs at $t_n$. Let $\theta$ denote the set of such tasks. Therefore, by Eq.~(\ref{eq:LAGincrease}), we have
\begin{eqnarray*}
\label{lemma3}
LAG(\textbf{d}, t_d, {S})  &{\leq}& LAG(\textbf{d}, t_n, {S})  \\
&&{\lbrace \rm{by}~Eq.~(\ref{eq:LAG for task set}) \rbrace}\\ 
&{=}& \sum_{\tau_i: \tau_{i,v} \in \textbf{d}} lag(\tau_i, t_n, {S}) \nonumber \\
&&{\lbrace \rm{by}~\rm{Lemma}~\ref{lemma:Upper_lemma1} \rbrace} \\
&{\leq}& \sum_{\tau_i \in \theta} \left( u_i \cdot x+e_i+u_i \cdot \left(p_i- \frac{e_i}{\alpha_{z}} \right) \right).
\end{eqnarray*}

%$ LAG(\textbf{d}, t_d, {S}) \stackrel{\lbrace \rm{by}~(\ref{eq:LAGincrease}) \rbrace}{\leq}  LAG(\textbf{d}, t_n, {S}) \stackrel{\lbrace \rm{by}~(\ref{eq:LAG for task set}) \rbrace}{=}   \sum_{\tau_i: \tau_{i,v} \in \textbf{d}} lag(\tau_i, t_n, {S}) \stackrel{\lbrace \rm{by}~\rm{Lemma}~\ref{lemma:Upper_lemma1} \rbrace}{\leq} \sum_{\tau_i \in \theta} ( u_i \cdot x+e_i+u_i \cdot (p_i- \frac{e_i}{\alpha_{z}}))$. 
Since two jobs cannot be executed on the same processor at any time instant, $ LAG(\textbf{d}, t_d, {S})$ reaches its maximal value when the  $m-1$ tasks in $\theta$ execute on the $m-1$ fastest processors. Thus, \\
\begin{eqnarray*}
&& LAG(\textbf{d}, t_d, {S}) \\ 
 &\leq\ & \sum_{\tau_i \in \psi,\alpha_{j} \in \varphi} \left( u_i \cdot x+e_i+u_i \cdot \left(p_i- \frac{e_i}{\alpha_{z}} \right) \right) \\ 
&&{\lbrace \rm{by}~\rm{Def.}~\ref{def:UE} \rbrace}\\
&{\leq}& \overline{U}_{m-1} \cdot x + \overline{E}. \hspace{37mm} \qedhere
\end{eqnarray*}
\end{proof}
%that have  enabled jobs at $t_n-1$ with deadlines at or before $t_n-1$. Any task not in $\theta$ can have an enabled job suspending at $t_n-1$, allowing $t_n-1$ to be non-busy. Note that if task $\tau_i$ does not have an enabled job at $t_n-1$, then $lag(\tau_i, t_n, {S}) \leq 0$. 

%\begin{align*}
%\label{eq:lemma4}
%LAG(\textbf{d}, t_d, {S}) &\leq LAG(\textbf{d}, t_n, {S})\\%
%& \mathrm{\lbrace by \mbox{ } (\ref{eq:LAGJ=sumlagT}) \rbrace}  \\
%&= \sum_{\tau_i: \tau_{i,v} \in \textbf{d}} lag(\tau_i, t_n, {S})  \\
%& \leq \sum_{\tau_i \in \theta} lag(\tau_i, t_n, {S}) + \sum_{\tau_j \in \tau - \theta} lag(\tau_j, t_n, {S})  \\
%& \mathrm{\lbrace by \mbox{ } Lemma \mbox{ } \ref{lemma:Upper_lemma1} \rbrace}  \\
%& \leq \sum_{\tau_i \in \theta} (\overline{u_i} \cdot x + e_i +s_i+\overline{u_i} \cdot s_i ) \\
%&\hspace{4mm} + \sum_{\tau_j \in \tau-\theta} (e_j+s_j) \\
%& \mathrm{\lbrace by \mbox{ } Def. \mbox{ } \ref{def:UE} \rbrace}  \\
%& \leq \overline{U}_{m-1} \cdot x + \overline{E}. \qedhere
%\end{align*}

\subsection{Determining $x$}
\label{sec:x}

Setting the upper bound on $LAG(\textbf{d}, t_d, {S})$ in Lemma~\ref{lemma:Upper_lemma2} to be at most the lower bound in Lemma~\ref{lower_bound} will ensure that the response time of $\tau_{i,j}$ is at most $x + p_i $. The resulting inequality can be used to determine a value for $x$. By Lemmas~\ref{lower_bound} and \ref{lemma:Upper_lemma2}, this inequality is $R_{sum} \cdot x +p_i \geq \overline{U}_{m-1} \cdot x + \overline{E}$. Solving for $x$, to make a $x$ valid for all tasks, we have 
\begin{eqnarray}
\label{xrange}\vspace{-1mm}
 x \geq \dfrac{\overline{E}-p_{min}}{R_{sum}-\overline{U}_{m-1}}. 
\end{eqnarray} \vspace{-1mm}
By $U_{sum} \leq R_{sum}$ and Defs.\ref{def:UE}, $\overline{U}_{m-1} < R_{sum}$ clearly holds. Let
\begin{eqnarray}
\label{xvalue}\vspace{-1mm}
 x =max( 0 , \dfrac{\overline{E}-p_{min}}{R_{sum}-\overline{U}_{m-1}}) ,
\end{eqnarray} \vspace{-1mm} 
then the response time of $\tau_{i,j}$ will not exceed $x+2 \cdot p_i$ in ${S}$.

By the above discussion, the theorem below follows.

\begin{theorem}
\label{theorem:SRTtest}
With $x$ as defined in (\ref{xvalue}), the response time of any task $\tau_i$ scheduled under GEDF-H is at most $x + 2 \cdot p_i $, provided $U_{sum} \leq R_{sum}$.
\end{theorem}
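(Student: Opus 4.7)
The plan is to prove the response time bound by induction on jobs ordered by their GEDF-H priority (earliest deadline, ties broken by task ID). Let the jobs in $\tau$ be enumerated in priority order as $J_1, J_2, \ldots$, and let $P(k)$ be the statement that every job $J_h$ with $h \le k$ has response time at most $x + 2 \cdot p_{l}$, where $\tau_{l}$ is the task of $J_h$. The base case $P(1)$ holds because the highest-priority job has no higher-priority competition, so it completes within one period plus at most the time to run on the slowest eligible processor; using property \textbf{(P0)} and $v_l \geq u_l$, one gets a response time of at most $p_l \le x + 2 p_l$.

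For the inductive step, fix $\tau_{i,j}$ and let $t_d = d_{i,j}$. The inductive hypothesis $P(k-1)$ is precisely Assumption \textbf{(A)} for $\tau_{i,j}$. I would then apply Lemma~\ref{lemma:Upper_lemma2} to obtain $LAG(\textbf{d}, t_d, S) \le \overline{U}_{m-1} \cdot x + \overline{E}$, and show this upper bound is at most the threshold $R_{sum} \cdot x + p_i$ required by Lemma~\ref{lower_bound}. That reduces to verifying
\begin{equation*}
(R_{sum} - \overline{U}_{m-1}) \cdot x \;\ge\; \overline{E} - p_i,
\end{equation*}
which is exactly what the choice of $x$ in (\ref{xvalue}) delivers: since $p_{min} \le p_i$, we have
\begin{equation*}
x \;\ge\; \frac{\overline{E}-p_{min}}{R_{sum}-\overline{U}_{m-1}} \;\ge\; \frac{\overline{E}-p_i}{R_{sum}-\overline{U}_{m-1}}.
\end{equation*}
The denominator is strictly positive because $U_{sum} \le R_{sum}$ together with Def.~\ref{def:UE} (which omits the largest or smallest utilization from $\overline{U}_{m-1}$ whenever $n \ge m$, and trivially gives $\overline{U}_{m-1}=U_{sum}<R_{sum}$ otherwise under Eq.~(\ref{eq:Restriction})). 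The $\max(0,\cdot)$ in (\ref{xvalue}) handles the case where the numerator is non-positive, in which $x=0$ already satisfies the inequality. Having verified the hypothesis of Lemma~\ref{lower_bound}, we conclude that the response time of $\tau_{i,j}$ is at most $x + 2 p_i$, completing the induction.

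The main obstacle is conceptual rather than computational: one must ensure that the inductive hypothesis is \emph{exactly} the form of Assumption \textbf{(A)} used inside Lemmas~\ref{lower_bound}, \ref{lemma:Upper_lemma1}, and \ref{lemma:Upper_lemma2}, so the same constant $x$ is reused throughout. Because GEDF-H is a job-level static-priority scheduler, ordering jobs by priority is well-defined and the scheduling of $\tau_{i,j}$ depends only on jobs in $\textbf{d}$; this guarantees that invoking the hypothesis for higher-priority jobs does not create a circular dependence. A minor subtlety is that Lemma~\ref{lower_bound}'s Case 2 uses $p_i$ and $u_i$ for the task $\tau_i$ of the job under study, whereas Lemma~\ref{lemma:Upper_lemma1} uses $\alpha_{max}$ (not $v_i$) for other pending tasks; the uniform value of $x$ from (\ref{xvalue}) has been chosen conservatively enough (with $p_{min}$ in the numerator and the fastest $m-1$ processors in $\overline{E}$) to absorb these task-by-task differences, which is why the same $x$ works simultaneously for every task in the induction.
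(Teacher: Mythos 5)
Your proposal is correct and follows essentially the same route as the paper: Assumption \textbf{(A)} serves as the inductive hypothesis over jobs in priority order, Lemma~\ref{lemma:Upper_lemma2}'s upper bound on $LAG(\textbf{d}, t_d, S)$ is set against the threshold $R_{sum}\cdot x + p_i$ of Lemma~\ref{lower_bound}, and the choice of $x$ in (\ref{xvalue}) with $p_{min}\le p_i$ and $R_{sum}-\overline{U}_{m-1}>0$ closes the inequality. You merely make explicit the induction (base case and well-foundedness) that the paper leaves implicit in Sec.~\ref{sec:x}.
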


\section{Experiment}
\label{sec:Experiment}

Although GEDF-H ensures SRT schedulability with no utilization loss, the magnitude of the resulting response time bound is also important. In this section, we describe experiments conducted using randomly-generated task sets to evaluate the applicability of the response time bound given in Theorem~\ref{theorem:SRTtest}. Our goal is to examine how large the magnitude of response time is.

\paragraph{Experimental setup.} We simulate the Intel's QuickIA heterogeneous prototype platform \cite{chitlur2012quickia} in our experiments. The QuickIA platform contains two kinds of processors and each kind contains two processors. We assume that two of the processors $M_1$ and $M_2$ have unit speed and the other two processors $M_3$ and $M_4$ have two-unit speed, i.e., $\alpha_1 =1$ and $\alpha_2 = 2$. The unit time is assumed to be $1 ms$.

By the definitions of $\Psi$ and $\varPhi$, we have $\Psi_0= \{ M_1, M_2, M_3, M_4 \}$, $|\Psi_0| = 4$, $\Psi_1= \{M_3, M_4\}$, $|\Psi_1| = 2$. We generated tasks as follows. Task periods were uniformly distributed over $[10ms, 600ms]$. First, we generated tasks in $\varPhi_1$. According to Eq.~\ref{eq:Restriction},  $|\varPhi_1| \leq |\psi_1| = 2$ and the utilization of tasks in $\varPhi_1$ is at most $2$. We thus first randomly generated the number of tasks in $\varPhi_1$ from $0$ to $2$, and task utilizations were generated using the uniform distribution $(1, 2]$. Task execution costs were calculated from periods and utilizations. Then, we generated tasks in $\varPhi_0/\varPhi_1$. The utilization of tasks in $\varPhi_0/\varPhi_1$ is not more than $1$. These task utilizations were generated using three uniform distributions$: [0.001, 0.05]$(light), $[0.05, 0.2]$(medium) and $[0.2, 0.5]$(heavy). For each experiment, 10,000 task sets were generated. Each such task set was generated by creating tasks until total utilization exceeded $R_{sum}=6$, and by then reducing the last task's utilization so that the total utilization equaled $R_{sum}$.

\begin{figure*}[!ht]
  \centering
  \subfloat[Heavy task utilization]{\label{fig:2}\includegraphics[width=0.47\textwidth]{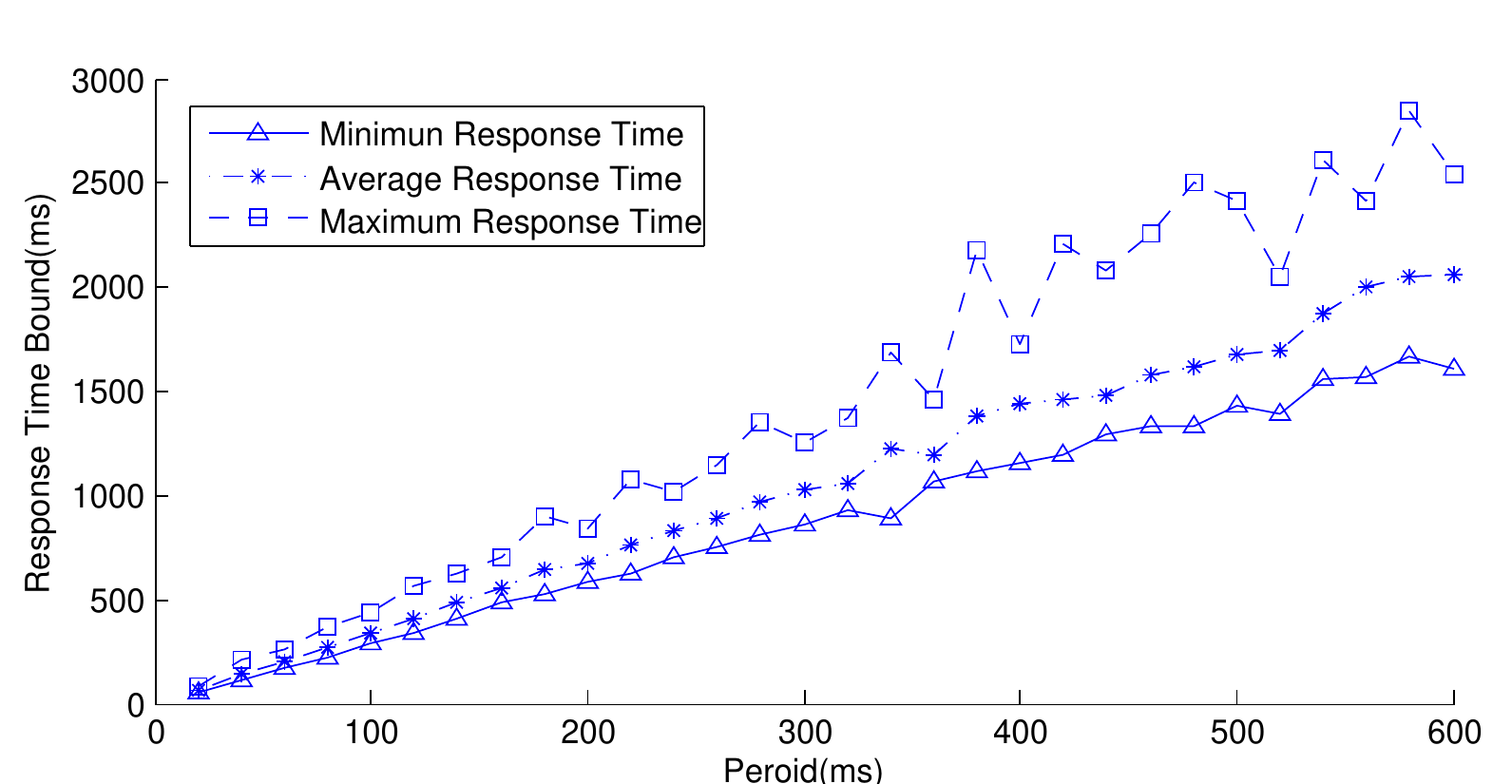}} \hspace{3mm}
  \subfloat[Period = 100$ms$]{\label{fig:3}\includegraphics[width=0.47\textwidth]{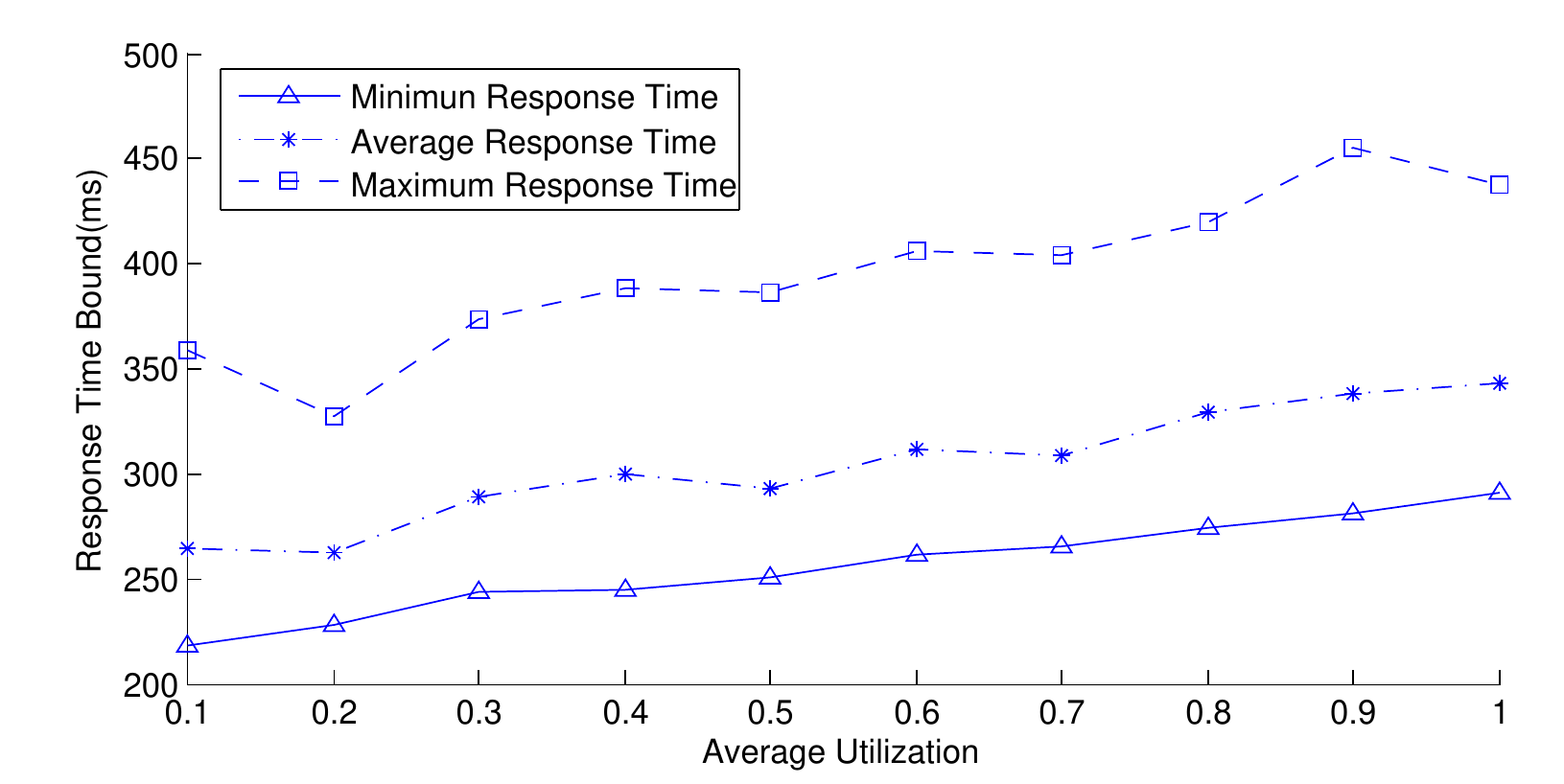}}\\
  \vspace{-2mm}
  \subfloat[Medium task utilization]{\label{fig:4}\includegraphics[width=0.47\textwidth]{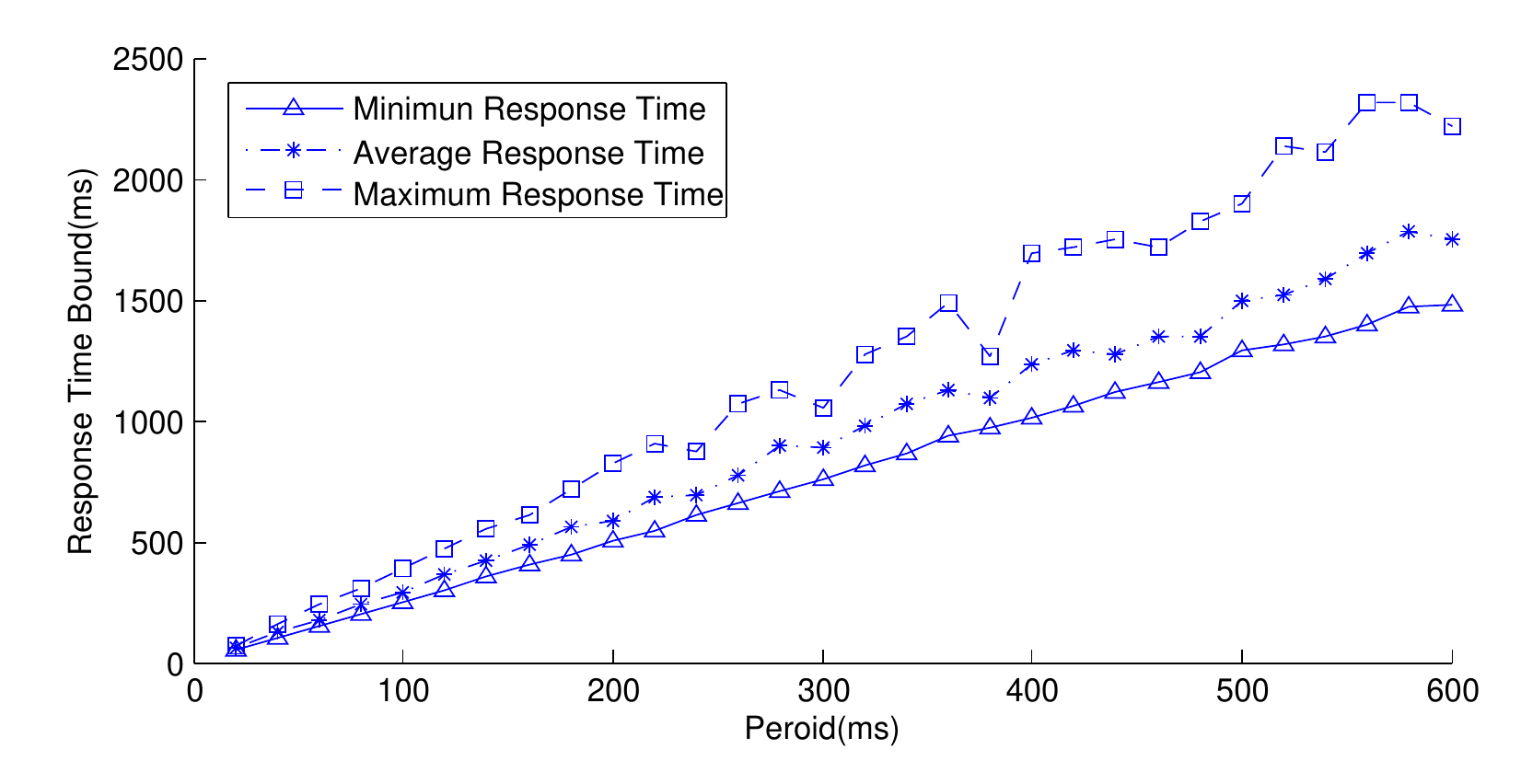}}  \hspace{3mm}           
  \subfloat[Period = 300$ms$]{\label{fig:6}\includegraphics[width=0.47\textwidth]{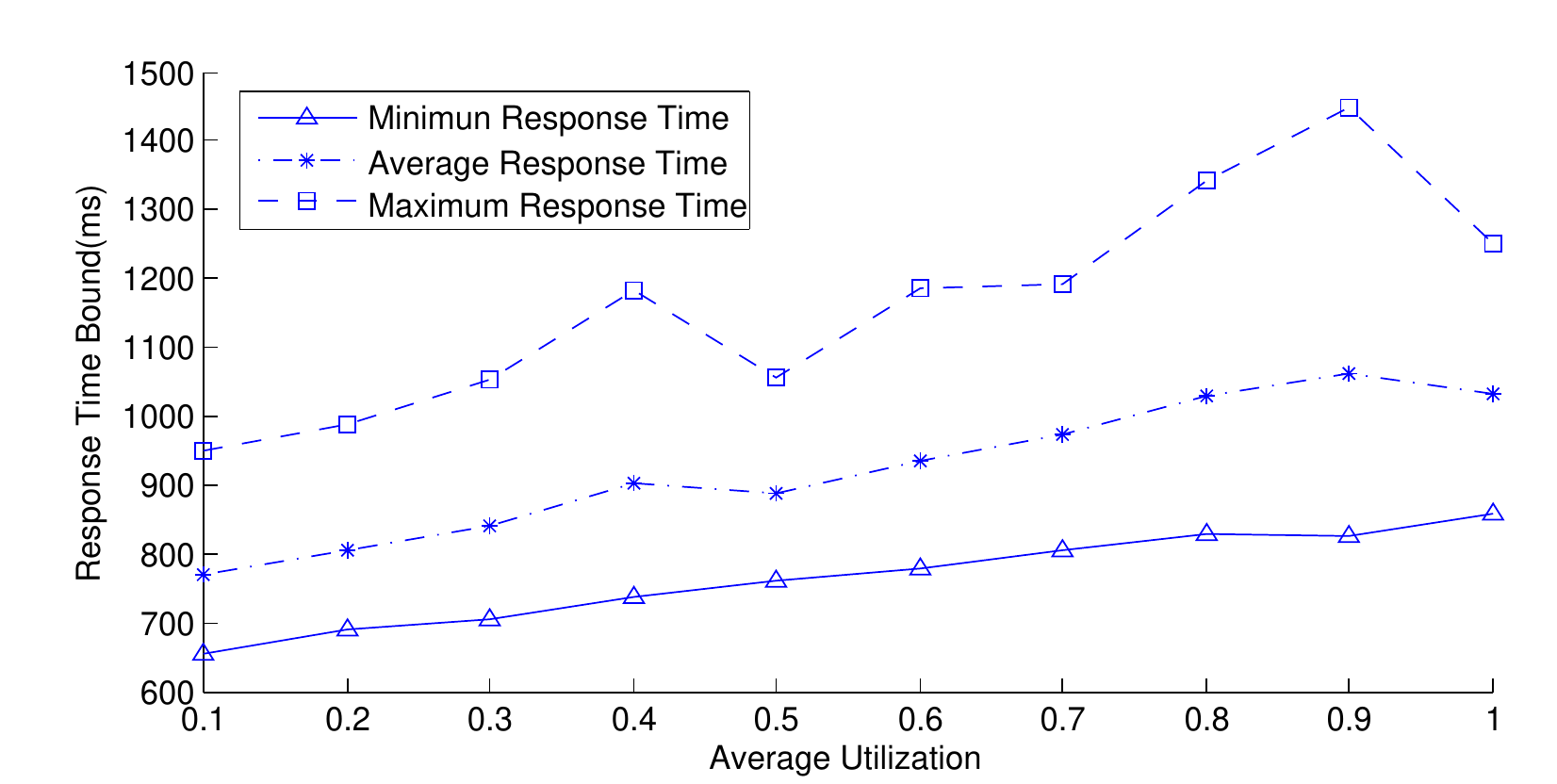}}\\
  \vspace{-2mm}  \subfloat[Light task utilization]{\label{fig:7}\includegraphics[width=0.47\textwidth]{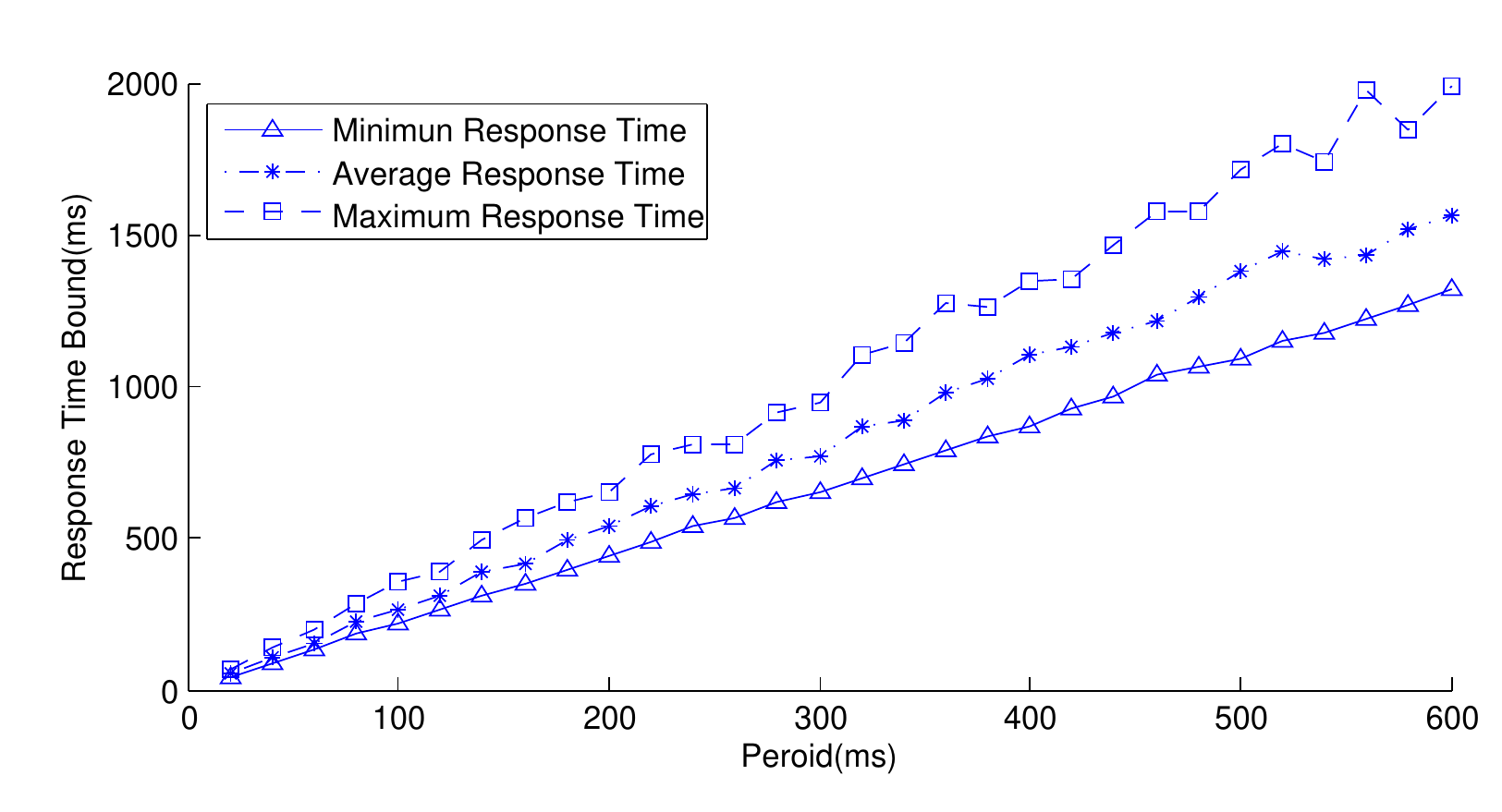}} \hspace{3mm}          
  \subfloat[Period = 600$ms$]{\label{fig:9}\includegraphics[width=0.47\textwidth]{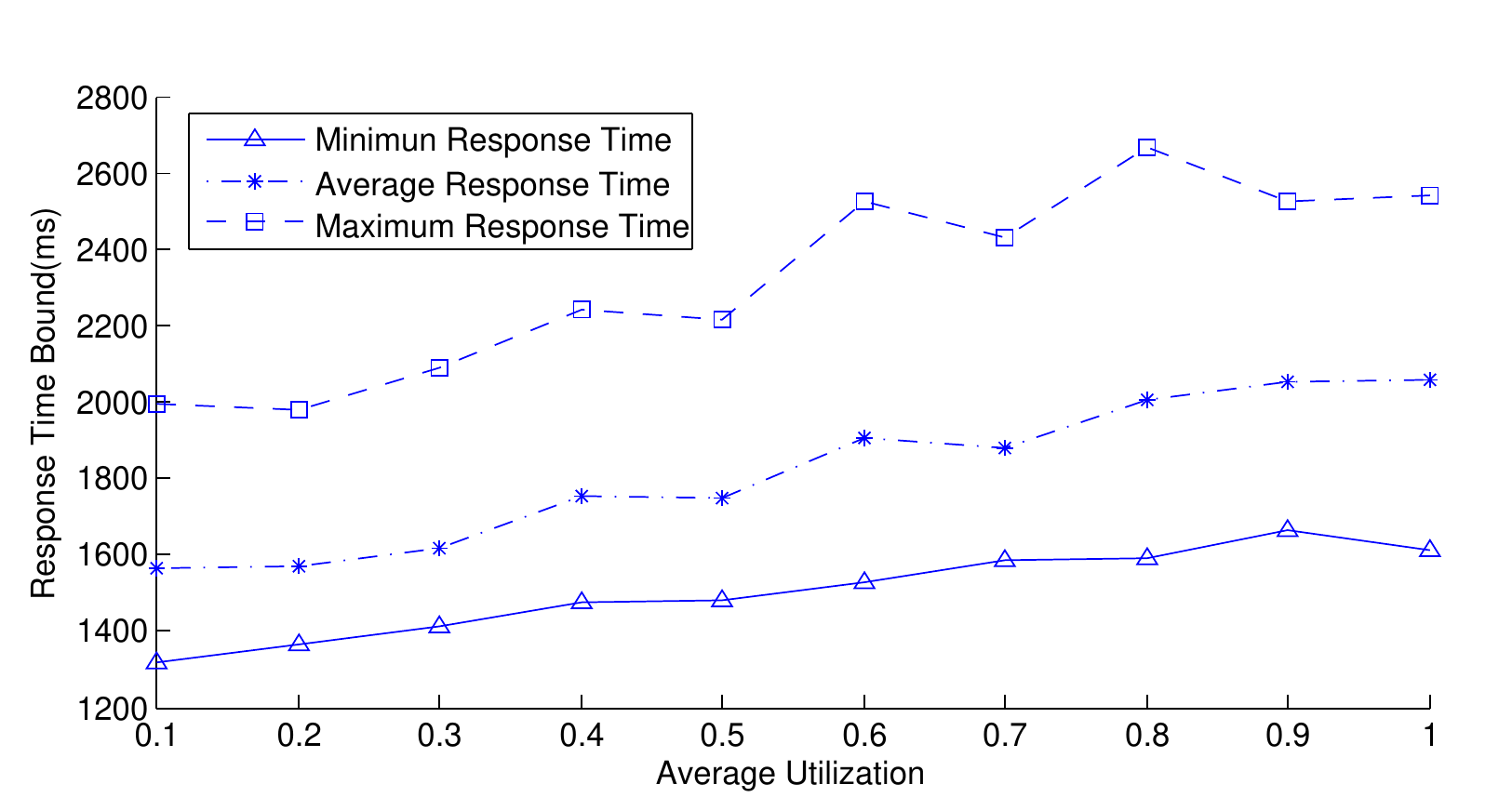}}
  \caption{\small Response time bounds. In all six graphs, the $y$-axis denotes the response time bound value. Each graph gives three curves plotting the maximum, average, and minimum response time bound among tasks, respectively. In the first column of graphs, the $x$-axis denotes the task periods. Light, medium, and heavy task utilizations are assumed in insets (a), (c), and (e), respectively. In the second column of graphs, the $x$-axis denotes the average task utilization of the generated task system. Three specific period values, 100ms, 300ms, and 600ms, are assumed in insets (b), (d), and (f), respectively. Note that the average task utilization is at most 1 in these experiments. This is because according to our task generation strategy, the number of tasks with utilization no greater than 1 is much larger than the number of tasks with utilizations greater than 1.
  } \vspace{-2mm} \normalsize
  \label{fig:exp}
\end{figure*}

\paragraph{Results.} The obtained results are shown in Fig.~\ref{fig:exp} (the organization of which is explained in the figure's caption). Each graph in Fig.~\ref{fig:exp} contains three curses, which plots the calculated maximum response time bound, average response time bound, and minimum response time bound among all tasks in the system, respectively. As seen in Figs.\ref{fig:exp}(a), (c), and (e), in all tested scenarios, the maximum response time bound is smaller than five task periods, while the average response time bound is slightly larger than three task periods (but smaller than four task periods). One observation herein is that when task utilizations become heavier, the response time bounds increase. This is intuitive because the denominator of Eq.~(\ref{xvalue}) becomes smaller when task utilizations are heavier. Moreover, as seen in Figs.~\ref{fig:exp}(b), (d), and (f), the response time bounds under GEDF-H  slightly increase along with the increase of the average task utilization of the system, under three fixed task period scenarios. Under these scenarios, the maximum response time bound is within three task periods and the average response time bound is within two task periods. To conclude, GEDF-H not only guarantees SRT schedulability with no utilization loss, but can provide such a guarantee with low predicted response time.

\section{Conclusion}
\label{sec:Conclusion}
We have shown that SRT sporadic task systems can be supported under GEDF-H on a heterogeneous multiprocessor with no utilization loss provided bounded response time is acceptable. GEDF-H is identical to GEDF except that it enforces a specific processor selection rule. As demonstrated by experiments presented herein, GEDF-H is able to guarantee schedulability with no utilization loss while providing low predicted response time. For the future work, we plan to design better algorithm that can reduce the job migration cost. Compared to GEDF, GEDF-H may incur more job migrations among processors due to the specific processor selection rule. Also it would be interesting to extent this work to hard-real systems and self-suspending task systems.

\begin{spacing}{0.1}
\bibliographystyle{plain} 
%\small
%\footnotesize
\bibliography{SRTHet}
\end{spacing}

\vspace{-1mm}
\section*{Appendix: Schedulability Analysis for NP-GEDF-H}
\label{sec:NP-EDF}

We now present our non-preemptive GEDF-H (NP-GEDF-H) schedulability analysis. Due to space constrains, we only provide the sketch of the proof.

\begin{definition}
\label{def:block job*}
For any time instant $t$, if there exists an $\varepsilon > 0$ such that during interval $[t,t+\varepsilon)$ there is an enabled job $\tau_{i,j}$ in \textbf{d} is not executing while any job $\tau_{k,l}$ not in \textbf{d} is executing on some processor during this interval, we say $\tau_{i,j}$ is blocked by $\tau_{k,l}$ at time $t$. $\tau_{i,j}$ is a blocked job; $\tau_{k,l}$ is a blocking job. $t$ is a blocking instant.
\end{definition}

\begin{definition}
\label{def:block interval*}
An interval $[a,b)$ is a blocking interval if every instant in it is a blocking instant. A blocking interval is said to be a maximal blocking interval if for any $c < a$, $[c,b)$ cannot be a blocking interval.
\end{definition}
 
\begin{definition}
\label{def:BIG B*}
Let $\beta$ denote the set of jobs not in \textbf{d} that block one or more jobs in \textbf{d} at some instants before $t_d$ and may continue to execute at $t_d$ under NP-GEDF-H. Let $B(\beta, t_d, S^{*})$ denote the total workload pending for jobs in $\beta$ at $t_d$. 
\end{definition}

\paragraph{Response time bound under NP-GEDF-H.} In the analysis of GEDF-H scheduling, only the workload pending for jobs in \textbf{d} can compete with $\tau_{i,j}$. However, under NP-GEDF-H, jobs not in \textbf{d} are still able to compete with $\tau_{i,j}$. Even though such jobs have lower priority, they cannot be preempted once they start execution before $t_d$. Hence, the pending workload from blocking jobs should be taken into consideration. After accurately defining the pending work, we are able to follow the similar analysis for NP-GEDF-H. We make the following similar assumption.

\textbf{(A-NP)} The response time of every job $\tau_{l,k}$, where $\tau_{l,k}$ has higher priority than $\tau_{i,j}$, is at most $x+2 \cdot p_l$ in $S$, where $x \geq 0$.

By the discussion above, the total pending work is presented by 
\begin{eqnarray*}
\label{NPLAG}\vspace{-1mm}
&LAG(\textbf{d}, t_d, {NP-GEDF-H})+B(\beta, t_d, S^{*}).
\end{eqnarray*} \vspace{-1mm}
To derive the lower bound of $LAG(\textbf{d}, t_d, {NP-GEDF-H})+B(\beta, t_d, S^{*})$, we have following parallel Lemma \ref{lemma:NP-Lower_lemma} for NP-EDFH. The proof is the same to the proof of Lemma \ref{lower_bound}

\begin{lemma}
\label{lemma:NP-Lower_lemma}
If $LAG(\textbf{d}, t_d, {NP-GEDF-H})+B(\beta, t_d, S^{*}) \leq R_{sum} \cdot x +2 \cdot p_i$ and the Assumption \textbf{(A-NP)} holds, then the response time of $\tau_{i,j}$ is at most $x + 2 \cdot p_i$,
\end{lemma}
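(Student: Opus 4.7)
The plan is to replay the argument of Lemma~\ref{lower_bound} on the enlarged set of jobs $\textbf{d} \cup \beta$, using $LAG(\textbf{d}, t_d, \text{NP-GEDF-H}) + B(\beta, t_d, S^{*})$ as the workload budget in place of $LAG(\textbf{d}, t_d, S)$. As before, I would let $\eta_{i,j}$ be the amount of work $\tau_{i,j}$ has completed by $t_d$ and set $y = x + \eta_{i,j}/R_{sum}$, then distinguish whether $[t_d, t_d + y)$ is busy or non-busy for $\textbf{d} \cup \beta$ (i.e., whether all $m$ processors execute jobs in $\textbf{d}$ or $\beta$ throughout).

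In the busy case, every processor runs a job in $\textbf{d} \cup \beta$ throughout $[t_d, t_d + y)$, so exactly $R_{sum} \cdot y = R_{sum} \cdot x + \eta_{i,j}$ workload units are performed. Subtracting from the hypothesis leaves at most $2p_i - \eta_{i,j}$ units still pending at $t_d + y$. Since a blocking job in $\beta$ is non-preemptable but must eventually complete and free its processor, and since $\tau_{i,j}$ is among the highest-priority jobs still enabled after that point, its residual $e_i - \eta_{i,j}$ executes at a speed of at least $v_i \geq u_i$ by Property~\textbf{(P0)} and Eq.~(\ref{eq:GEDF-H}). This yields $f_{i,j} \leq t_d + x + p_i$ by the same algebra as in the proof of Lemma~\ref{lower_bound}, hence response time at most $x + 2p_i$. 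In the non-busy case, I would pick the earliest non-busy instant $t_s \in [t_d, t_d + y)$ for $\textbf{d} \cup \beta$. At $t_s$ at most $m-1$ tasks have pending jobs in $\textbf{d}$ (by the NP analog of Property~\textbf{(P2)}) and, crucially, no job of $\textbf{d}$ is blocked by $\beta$ after $t_s$, because any $\beta$-job still running at $t_s$ must have been running continuously since before $t_d$ and no new $\beta$-job can begin once a processor stops being busy. The branching ``$\tau_{i,j}$ executes at $t_s$'' versus ``predecessor $\tau_{i,j-1}$ still pending'' then goes through verbatim as in Lemma~\ref{lower_bound}, using Assumption~\textbf{(A-NP)} for the inductive bound on $f_{i,j-1}$ and Eq.~(\ref{eq:GEDF-H}) to convert the residual work into wall-clock time.

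The main obstacle will be the busy-case bookkeeping around $\beta$: a blocking job occupies a processor that could otherwise have served $\textbf{d}$, and one must make sure that this occupied capacity is charged once---to $B(\beta, t_d, S^{*})$---and never double-counted against $LAG$. The delicate step is to verify that the sum $LAG + B$ is exactly the right budget, neither leaking the contribution of jobs that transition from $\beta$ into ``completed'' during $[t_d, t_d + y)$ nor omitting work in $\textbf{d}$ that piles up while $\beta$-jobs are monopolizing fast processors. Once this accounting is justified, the extra $p_i$ on the right-hand side of the hypothesis (relative to Lemma~\ref{lower_bound}) exactly absorbs the worst-case additional wait caused by a $\beta$-job straddling $t_d$, and the rest of the proof is a direct transcription of the preemptive argument.
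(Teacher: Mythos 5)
Your overall strategy is exactly what the paper intends: the paper's own ``proof'' of this lemma is a single sentence deferring to Lemma~\ref{lower_bound}, with $LAG(\textbf{d}, t_d, \cdot)+B(\beta,t_d,S^{*})$ playing the role of the pending-work budget, and your busy/non-busy split over $\textbf{d}\cup\beta$ is the right way to instantiate that. The non-busy branch of your argument (earliest non-busy instant $t_s$, no new $\beta$-job can start blocking once a processor is free for $\textbf{d}\cup\beta$, then the two sub-cases via \textbf{(P0)}, Eq.~(\ref{eq:GEDF-H}) and Assumption \textbf{(A-NP)}) transcribes correctly.

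The busy branch, however, does not go through as you wrote it. With the budget $R_{sum}\cdot x+2\cdot p_i$, subtracting the $R_{sum}\cdot y = R_{sum}\cdot x+\eta_{i,j}$ units completed during $[t_d,t_d+y)$ leaves up to $2p_i-\eta_{i,j}$ units pending, and the serialization bound used in Lemma~\ref{lower_bound} (``this remaining work completes even on a slowest processor'') then gives $f_{i,j}\le t_d+y+2p_i-\eta_{i,j}\le t_d+x+2p_i$, i.e.\ a response time of $x+3p_i$, not $x+2p_i$. Your claim that $f_{i,j}\le t_d+x+p_i$ follows ``by the same algebra'' silently replaces that serialization bound with the $v_i\ge u_i$ speed argument, which is only valid in the non-busy case where $\tau_{i,j}$ is guaranteed a processor; in the busy case the residual $2p_i-\eta_{i,j}$ may consist entirely of higher-priority or non-preemptable work that must drain first. (The mismatch is arguably inherited from the paper itself: Theorem~\ref{theorem:NPSRTtest} uses $E^{*}-p_{min}$, which is what falls out of a lower-bound hypothesis of $R_{sum}\cdot x+p_i$, so the $2\cdot p_i$ in the lemma statement is most plausibly a typo for $p_i$ --- but then your busy case should be run with budget $p_i$, exactly as in Lemma~\ref{lower_bound}, rather than asserting the stronger hypothesis still yields the same conclusion.) Separately, you identify the accounting of $\beta$ (no double counting between $LAG$ and $B$, jobs leaving $\beta$ during the interval) as ``the main obstacle'' but leave it unresolved; since that accounting is precisely what distinguishes this lemma from Lemma~\ref{lower_bound}, it needs to be carried out, not deferred.
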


To derive the upper bound of $LAG(\textbf{d}, t_d, {NP-GEDF-H})+B(\beta, t_d, S^{*})$, we have the following parallel Lemma \ref{lemma:NP-upper_lemma} for NP-GEDF-H. The proof is slightly different from the proof of Lemma \ref{lemma:Upper_lemma2}. Let $E^{*}$ be the largest value of the expression $ \sum_{\tau_i \in \psi ,\alpha_{j} \in \varphi}  (e_i+u_i \cdot e_i \cdot (1-\frac{1}{\alpha_j})) +e_k$, where $\psi$ denotes any set of $m-1$ tasks in $\tau$, $\varphi$ denotes the set of speed of $m-1$ processors those are the most $m-1$ fastest, $e_k$ is the execution of any $\tau_k$ not in $\psi $.

\begin{lemma}
\label{lemma:NP-upper_lemma}
With Assumption \textbf{(A-NP)}, $LAG(\textbf{d}, t_d, {NP-GEDF-H})+B(\beta, t_d, S^{*}) \leq \overline{U}_{m-1} \cdot x + E^{*}$. 
\end{lemma}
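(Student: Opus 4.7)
The plan is to mirror the architecture of Lemma~\ref{lemma:Upper_lemma2} while inserting two modifications: one accounting for the fact that a job running under NP-GEDF-H cannot migrate to a faster processor mid-execution, and one accounting for the extra pending workload carried by the blocking jobs in $\beta$.

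First, I would introduce the non-preemptive analogues of the latest non-busy instant $t_n \leq t_d$ for $\textbf{d}$ and of Properties~\textbf{(P1)}--\textbf{(P2)}. These yield $LAG(\textbf{d}, t_d, \text{NP-GEDF-H}) \leq LAG(\textbf{d}, t_n, \text{NP-GEDF-H})$, and at most $m-1$ tasks in $\textbf{d}$ have pending jobs at $t_n$; call this set $\theta$. Next, I would restate Lemma~\ref{lemma:Upper_lemma1} for NP-GEDF-H. In its Case~2, the earliest possible completion time of $\tau_{i,k}$ was $t_n + (e_i-\gamma_{i,k})/\alpha_z$, i.e.\ it assumed $\tau_{i,k}$ could execute on the fastest processor. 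Under NP-GEDF-H, once a job begins it is pinned to the processor that started it, so $\alpha_z$ must be replaced by the actual speed $\alpha_j$ of that processor. Propagating this change through the algebra yields the per-task lag bound $e_i + u_i \cdot e_i \cdot (1 - 1/\alpha_j)$, which is the summand appearing inside $E^{*}$.

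Summing these per-task bounds over $\theta$, and then maximizing over the choice of $m-1$ tasks and the $m-1$ fastest processor speeds exactly as in Def.~\ref{def:UE} and Lemma~\ref{lemma:Upper_lemma2}, produces
\[
LAG(\textbf{d}, t_d, \text{NP-GEDF-H}) \leq \overline{U}_{m-1} \cdot x + \sum_{\tau_i \in \psi,\,\alpha_j \in \varphi}\bigl(e_i + u_i \cdot e_i \cdot (1 - 1/\alpha_j)\bigr).
\]
What remains is to absorb $B(\beta, t_d, S^{*})$ into the extra ``$+e_k$'' slack in $E^{*}$. The plan here is to show that, although $\beta$ may in principle contain several jobs, only one of them contributes its full execution $e_k$ to the pending workload at $t_d$ that is not already counted in the LAG term. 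The reason is that a blocking job occupies exactly one processor, and the other $m-1$ processors have already been charged in the $m-1$ maximization above; moreover, by Def.~\ref{def:block job*}, any job in $\beta$ must still be executing through $t_d$, and distinct tasks can be executing simultaneously on distinct processors, but the priority structure of GEDF together with the fact that at $t_n$ at most $m-1$ processors run jobs in $\textbf{d}$ leaves room for at most one additional blocking contribution to $B(\beta, t_d, S^{*})$ at $t_d$. This contribution is at most the execution cost $e_k$ of the blocking task, yielding the extra summand in $E^{*}$.

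The main obstacle I anticipate is exactly this last step: making rigorous the claim that only one job in $\beta$ contributes its full execution cost to $B(\beta, t_d, S^{*})$ beyond what the LAG already captures. I would tackle it by partitioning $\beta$ according to the processor each job runs on and arguing that any blocking job sitting on one of the $m-1$ ``fast'' processors used in the maximization can be swapped against a pending job of $\textbf{d}$ without increasing the overall bound, leaving only a single ``spare'' processor whose blocking contribution is at most $e_k$. With that combinatorial argument in place, adding the blocking bound to the LAG bound delivers the claimed inequality $LAG(\textbf{d}, t_d, \text{NP-GEDF-H}) + B(\beta, t_d, S^{*}) \leq \overline{U}_{m-1} \cdot x + E^{*}$.
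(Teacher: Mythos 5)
Your overall architecture (reuse Lemmas~\ref{lemma:Upper_lemma1} and~\ref{lemma:Upper_lemma2} and then add a blocking term) is the right instinct, but the step you yourself flag as the main obstacle is a genuine gap, and your proposed fix does not close it. By the definition of $\beta$, every job in $\beta$ may still be executing at $t_d$, each on its own processor, and $B(\beta,t_d,S^{*})$ sums the remaining workload of \emph{all} of them; nothing in the definitions prevents several (up to $m$) processors from being occupied at once by non-preemptable jobs not in $\textbf{d}$, e.g.\ when a burst of high-priority releases arrives while every processor is already committed to a lower-priority job. Your ``swap'' argument tries to trade a blocking job's remaining execution against the lag of a pending task in $\theta$, but these are incomparable quantities: jobs in $\beta$ are not in $\textbf{d}$, so their remaining work appears nowhere in $LAG(\textbf{d},\cdot,\cdot)$, and the $(m-1)$-fold maximization in Def.~\ref{def:UE} already charges the full per-task lag bound to each task of $\theta$ independently of which processor is doing what; hence ``the other $m-1$ processors have already been charged'' does not follow. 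The paper's (admittedly terse) proof uses a device you omit entirely: it splits on whether the latest non-busy instant $t_n$ is a blocking instant, and in the blocking case retreats to the start $t^{'}$ of the \emph{maximal blocking interval} $[t^{'},t_n)$, bounds the joint quantity $LAG+B$ there (where maximality controls how many blocking jobs exist and how much work they can still carry), and then extends the bound forward to $t_n$, using the fact that the blocking processors are continuously draining $B$. Without an argument of this kind, bounding $B$ directly at $t_d$ by a single $e_k$ is unjustified.

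A secondary issue: your claim that replacing $\alpha_z$ by the pinned processor's speed $\alpha_j$ in Case~2 of Lemma~\ref{lemma:Upper_lemma1} ``yields'' the summand $e_i+u_i\cdot e_i\cdot(1-1/\alpha_j)$ is not what the algebra gives. Repeating that computation produces $u_i\cdot x+e_i+u_i\cdot(p_i-e_i/\alpha_j)=u_i\cdot x+e_i+e_i\cdot(1-u_i/\alpha_j)$, which differs from the summand in $E^{*}$ whenever $u_i\neq 1$. You should either derive the paper's expression honestly or flag the discrepancy, rather than asserting that the algebra lands on the target formula.
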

\begin{proof}
Let $t_n$ be the latest non-busy instant before $t_d$. For NP-GEDF-H, we consider following two cases. \textbf{Case 1.} $t_n$ is not a blocking instant, we are able to do the analysis similar to Lemma 2. \textbf{Case 2.} $t_n$ is a blocking instant. Let $[t^{'},t_n)$ be the maximal blocking interval. And we first derive the upper bound for $LAG(\textbf{d}, t^{'}, {NP-GEDF-H})+B(\beta, t^{'}, S^{*})$; then extend it to $LAG(\textbf{d}, t_n, {NP-GEDF-H})+B(\beta, t_n, S^{*})$.
\end{proof}

\vspace{-1mm}
By the Lemma \ref{lemma:NP-Lower_lemma} and \ref{lemma:NP-upper_lemma}, the theorem below immediately follows.
\vspace{-1mm}

\begin{theorem}
\label{theorem:NPSRTtest}
With $
 x= max( 0 , \dfrac{E^{*}-p_{min}}{R_{sum}-\overline{U}_{m-1}}) $, the response time of any task $\tau_i$ scheduled under NP-GEDF-H is at most $x + 2 \cdot p_i $, provided $U_{sum} \leq R_{sum}$.
\end{theorem}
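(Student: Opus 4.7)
The plan is to mirror the structure of Theorem~\ref{theorem:SRTtest}, using Lemmas~\ref{lemma:NP-Lower_lemma} and~\ref{lemma:NP-upper_lemma} as the two ingredients and closing the argument by induction on job priority. Order all jobs of $\tau$ by NP-GEDF-H priority (earliest deadline, task-ID tiebreak). For the highest-priority job, Assumption \textbf{(A-NP)} is vacuous and the lemmas apply with $x = 0$. For the inductive step, assume that every job $\tau_{l,k}$ of higher priority than $\tau_{i,j}$ has response time at most $x + 2 p_l$, so Assumption \textbf{(A-NP)} holds with the chosen $x$, and I would apply the two lemmas to the fixed job $\tau_{i,j}$.

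Given the inductive hypothesis, Lemma~\ref{lemma:NP-upper_lemma} yields
\begin{equation*}
LAG(\textbf{d},t_d,\text{NP-GEDF-H}) + B(\beta,t_d,S^{*}) \leq \overline{U}_{m-1}\cdot x + E^{*}.
\end{equation*}
To invoke Lemma~\ref{lemma:NP-Lower_lemma}, I need this right-hand side to be bounded above by the threshold $R_{sum}\cdot x + p_i$ appearing in that lemma, which rearranges to $x \geq (E^{*} - p_i)/(R_{sum} - \overline{U}_{m-1})$. Since the induction must proceed uniformly for every task, I would replace $p_i$ by $p_{min}$, yielding the sufficient condition $x \geq (E^{*} - p_{min})/(R_{sum} - \overline{U}_{m-1})$; the $\max(0,\cdot)$ in the theorem statement then covers the case where the numerator is non-positive, since setting $x = 0$ already makes the inequality hold. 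I would also note that the denominator is strictly positive: $\overline{U}_{m-1}$ omits at least one task of positive utilization, so $\overline{U}_{m-1} < U_{sum} \leq R_{sum}$. Applying Lemma~\ref{lemma:NP-Lower_lemma} then gives a response time of at most $x + 2 p_i$ for $\tau_{i,j}$, completing the induction.

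The main obstacle I anticipate is ensuring that the blocking term $B(\beta,t_d,S^{*})$ does not introduce a circular dependency into the induction: jobs in $\beta$ have lower priority than $\tau_{i,j}$, so Assumption \textbf{(A-NP)} gives us no direct handle on their completion. I would resolve this by observing that $E^{*}$ already absorbs a full execution cost $e_k$ for one such blocking job (at most one per processor can be carrying over at $t_d$ by the non-preemptive rule, and the $E^{*}$ definition accounts for this with its trailing $+e_k$), so the bound on $B(\beta,t_d,S^{*})$ is purely structural and does not feed back into the inductive hypothesis. With this sanity check in place, the theorem follows from the two appendix lemmas exactly as Theorem~\ref{theorem:SRTtest} followed from Lemmas~\ref{lower_bound} and~\ref{lemma:Upper_lemma2} in the preemptive analysis.
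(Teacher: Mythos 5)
Your proposal is correct and follows essentially the same route as the paper, which derives the theorem from Lemmas~\ref{lemma:NP-Lower_lemma} and~\ref{lemma:NP-upper_lemma} exactly as Theorem~\ref{theorem:SRTtest} was derived from Lemmas~\ref{lower_bound} and~\ref{lemma:Upper_lemma2}, with the same induction on job priority and the same solve-for-$x$ step. (One small note: Lemma~\ref{lemma:NP-Lower_lemma} actually states the threshold as $R_{sum}\cdot x + 2\cdot p_i$ rather than $R_{sum}\cdot x + p_i$, but since you impose the stronger condition with $p_{min}$, your choice of $x$ remains sufficient.)
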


\end{document}